\pgfplotsset{compat=1.18}
\newcommand{\E}{\mathop{\mathbb{E}}}
\newtheorem{theorem}{Theorem}
\newtheorem{lemma}[theorem]{Lemma}
\newtheorem{corollary}[theorem]{Corollary}
\newtheorem{definition}[theorem]{Definition}
\newtheorem{remark}[theorem]{Remark}
\title{Compression Barriers in Autoregressive Transformers}
\author{Themistoklis Haris \\
Boston University \\
\texttt{tharis@bu.edu}
\and
Krzysztof Onak\\
Boston University \\
\texttt{konak@bu.edu}
}
\begin{document}

\maketitle

\begin{abstract}
    A key limitation of autoregressive Transformers
    %\ko{Do transformers need to be spelled with the first letter being capital? } --> THEMIS: It seems other works do it like this 
    is the large memory needed at inference-time to cache all previous key-value (KV) embeddings.
    Prior works address this by compressing the KV cache but often assume specific structural properties of the embeddings.
    This raises the following natural question: Can truly sublinear space utilization be achieved without such assumptions?
    In this work, we answer this question in the negative. Any algorithm for attention-based token generation must use $\Theta(nd)$ space, where $n$ is the number of tokens generated so far and $d \geq \Omega(\log n)$ is the dimension of the KV embeddings.
    % \ko{This suggests that $\Omega(nd)$ is possible even for $d$ much larger than $\log n$. Is this the case because we have to store the vectors?}
    % THEMIS: Yes, O(nd) is an upper bound for storing all the vectors.
    %
    Our proof involves a reduction from a classic communication complexity problem and uses a randomized construction that leverages
    properties of projections in the spirit of the Johnson-Linderstrauss lemma.
    For the low-dimensional regime $d = o(\log n)$, we show that any algorithm requires $\Omega(de^d)$ space and prove, using tight bounds on covering numbers, that \textsc{SubGen}, proposed by \citet{Zandieh2024}, matches this bound. 
    Further, we investigate how sparsity assumptions enable token generation in truly sublinear space, presenting impossibility results and proposing a new KV cache compression algorithm for sliding window attention when the value cache outside the window is unmasked. 
    Finally, we analyze token generation's time complexity, using an indistinguishability argument to prove that no non-adaptive algorithm can compute attention online in sublinear time for all tokens.
\end{abstract}

\section{Introduction}
Transformers have revolutionized the field of Deep Learning by enabling the creation of models that achieve state-of-the-art performance on a wide range of tasks \citep{dosovitskiy2020image,vaswani2017attention}. 
Central to their success lies the attention mechanism, which allows the model to ``focus'' on different parts of the input sequence when generating the output. 
However, the attention mechanism comes with a significant drawback: the large memory footprint required at token generation for caching numerous embedding vectors for each previous token. 
The cached token embeddings are known as the \textit{key-value (KV) cache}. 
Indeed, due to the KV cache, the memory required by Transformer LLMs for long-context token generation grows linearly with the number of tokens, which can be prohibitive for large prompts.

Numerous past works \citep{Jin2024,liu2024scissorhands,zandieh2024qjl,Zhang2023,liu2023deja,Zandieh2024,chen2021scatterbrain,liu2024minicache} have attempted to address this issue by compressing the KV cache to achieve sublinear space utilization. 
These algorithms typically rely on specific structural assumptions about the key-value token embeddings.
For example, the sliding window attention mechanism assumes that each token attends only to the $W$ tokens before it, masking away both the key and value embeddings outside of a short sliding window. 
This assumption allows for sublinear space utilization by storing only the key and value vectors in the sliding window.

This leads to the following natural question:
\begin{center}
    \textit{Is there a sublinear space algorithm for attention-based autoregressive token generation\\ on 
    inputs that lack significant structural assumptions?}
\end{center}

In this paper, we answer the above question in the negative by showing that any algorithm for attention-based token generation must use linear space in the number of tokens to store the KV cache, even if sparsity is present.
This result suggests that KV cache compression should probably best be viewed from a data-driven perspective, with the most successful algorithms being the ones that quickly and effectively exploit, learn and adapt to the structure of the data.
Beyond this insight, our work also draws on techniques from the extensive literatures on streaming, property testing, sublinear algorithms, communication complexity, and dimensionality reduction, demonstrating their effectiveness and utility 
in complexity theory, learning, and optimization.

% \subsection{Paper Outline}
% In Section \ref{sec:prelims}, we define token generation as a streaming problem and outline key preliminaries. Section \ref{sec:lower-bound-main} presents our main lower bound theorem. In Section \ref{sec:studying-sparsity}, we examine the impact of sparsity assumptions, propose a sliding window attention algorithm in Section \ref{sec:algorithm}, and address related impossibility results. Finally, in Section \ref{sec:time-complexity}, we discuss the time complexity of token generation.

\subsection{Related Work}
\paragraph{KV Cache Compression}
A large body of work has recently focused on KV cache compression for Transformer LLMs. 
These algorithms exploit either structural or learned sparsity patterns in the input embeddings.
\citet{Zhang2023}, for instance, proposed an eviction scheme for the KV cache that leverages techniques from dynamic submodular maximization. 
\citet{xiaoefficient} observed empirically that the initial tokens in the input sequence---which they called \textit{attention sinks}---often retain high attention weights throughout the entire generation process. 
This suggests a natural way to compress the KV cache by storing these sinks, in addition to a sliding window of recent token embeddings. 
In a similar fashion, \citet{liu2023scissorhands} observe that tokens with high attention weights tend to remain highly influential to the attention output, a phenomenon they call \textit{persistence of importance}. 
The work of \citet{liu2023deja} has also been influential in this direction, proposing the importance of \textit{contextual sparsity} in Transformer models. 

Beyond the assumptions of sparsity, the work of \citet{Zandieh2024} leverages \textit{clusterability} properties of the key embeddings to achieve provably sublinear space in certain parameter regimes. 
More hardware-oriented methods have also been proposed, such as the works of \citet{Duanmu2024} and \citet{zandieh2024qjl} which leverage \textit{quantization}, and the work of \citet{Jin2024} which examines the I/O complexity of the problem.

\paragraph{Impossibility Results for Transformers}
As research towards efficient Transformers progresses, complexity-theoretic techniques have helped shed light on their limitations.
The computation of the full attention matrix has been shown to require $\Omega(n^2)$ time via fine-grained complexity reductions \citep{Keles2022,alman2024fast}. 
Similar impossibility results have also been shown for the attention gradients \citep{alman2024fine}.
Furthermore, important insights have been obtained by studying the limitations of Transformers as a computational model through the lens of communication complexity and circuit complexity \citep{alman2024fundamental,chen2024circuit}. 
Finally, the representational and generalization power of Transformers has been theoretically studied as well \citep{sanford2024transformers,sanford2024representational,li2023transformers}.

\subsection{Our Contributions and Results}
We provide a theoretical analysis of the space and time complexity of token generation in Transformers.
During inference, an autoregressive Transformer uses the existing sequence of tokens it generated to produce the next token. 
Formally, given a stream of $d$-dimensional token embeddings $\{(q_i,k_i,v_i)\}_{i=1}^n$ the algorithm calculates at each timestep the attention function $\text{Attn}(q_i, K_i, V_i)$, where $K_i$ and $V_i$ contain all previously seen embedding vectors.
The attention function (see Section \ref{sec:prelims})  considers a normalized inner-product similarity metric between $q_i$ and all the vectors in $K_i$, weighting the vectors $v_i$ based on that metric:
\begin{align}
    Z_i:=\text{Attn}(q_i, K_i, V_i) := \sum\limits_{\ell=1}^n w(q_i, k_\ell)\cdot v_{\ell},\,\text{ where }w(q_i,k_\ell)\propto e^{q_i^T k_\ell}
\end{align}
Our main theorem is a space lower bound for algorithms that approximate the attention function during token generation. 
%
% For notational simplicity, define the weighted $\ell_\infty$-norm as: $$|x|_{\infty, y} := \max\limits_{i} \frac{|x_i-y_i|}{|y_i|}$$
\begin{theorem}
\label{thm:lower-bound}
Let $\eta \in (0,1)$ be an arbitrary constant and $d\geq 2$. Let $Z_n := \operatorname{Attn}(q_n,K_n,V_n) \in \mathbb{R}^d$. Any algorithm that can, with probability at least $9/10$, for all timesteps $i \leq n$, produce a $(1\pm \eta)$-approximation\footnote{We say that $x \in \mathbb{R}^d$ is a $(1\pm\eta)$-approximation of $y \in \mathbb{R}^d$ if for all $j\in[d]$ we have $|x_i-y_i| \leq \eta\cdot y_i$.} $\mathcal{O}_i$ of $Z_n$
% \begin{align}
%     \left|\mathcal{O}_i-Z_i\right|_{\infty, Z_i} \leq \eta
% \end{align}
must use at least $\Omega(d\cdot\min\{n,e^d\})$ bits of memory.
\end{theorem}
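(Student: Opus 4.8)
The plan is to reduce from the classical \textsc{Index} problem of one-way communication complexity: Alice holds $x \in \{0,1\}^M$, Bob holds $j \in [M]$, and after a single message from Alice, Bob must output $x_j$; this needs $\Omega(M)$ bits even for public-coin protocols with constant error. Given a purported streaming algorithm $\mathcal{A}$ that solves online attention with memory $s$, I simulate such a protocol: Alice encodes $x$ into the first $m$ tokens of a stream, runs $\mathcal{A}$, and sends its memory state (of $s$ bits) to Bob; Bob appends a designed query token together with enough value-zero padding tokens that the stream has length exactly $n$, finishes running $\mathcal{A}$, and decodes $x_j$ from the approximate attention output $\mathcal{O}_n$. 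Since the only information crossing the Alice--Bob cut is $\mathcal{A}$'s state, correctness forces $s=\Omega(M)$, and the construction will allow $M=\Theta\big(d\cdot\min\{n,e^d\}\big)$.

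\textbf{The embedding.} Fix a packing $u_1,\dots,u_m\in\mathbb{R}^d$ of $m=\Theta(\min\{n,e^d\})$ unit vectors whose pairwise inner products are at most $1-\Delta$ for a gap $\Delta$: a Johnson--Lindenstrauss--style argument (random unit vectors concentrate around orthogonality, union bound over pairs) gives $e^{\Omega(d)}$ such vectors with $\Delta$ a constant, and sharper covering-number bounds for the sphere push the count up to $\Theta(\min\{n,e^d\})$ at the cost of $\Delta=\Theta(1/d)$. Partition $x$ into $m$ blocks of $d$ bits, so $M=md=\Theta(d\cdot\min\{n,e^d\})$. Token $t\le m$ gets key $k_t=u_t$ and value $v_t$ whose $d$ coordinates lie in a fixed two-element positive set $\{1,c_\eta\}$ ($c_\eta$ a constant with $c_\eta>(1+\eta)/(1-\eta)$, chosen so the two levels stay separated under a $(1\pm\eta)$ factor), encoding the $t$-th block; its query is irrelevant. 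The remaining tokens have key and value $0$. On input $j$, Bob computes the block index $t^\ast$ and within-block position $r^\ast$, sets $q_n=\Lambda u_{t^\ast}$ with $k_n=v_n=0$, and picks $\Lambda=\Theta(\Delta^{-1}\log n)$ (polynomially bounded). Then the softmax weight on token $t^\ast$ is $w_{t^\ast}\ge 1-\delta$ for any target constant $\delta$: its numerator is $e^{\Lambda}$, every other real token contributes at most $e^{\Lambda(1-\Delta)}$, and each of the at most $n$ value-zero tokens contributes $e^0=1$ to the normalization. Because the value-zero tokens drop out of the output sum and all value entries are $O(1)$, $\|Z_n-v_{t^\ast}\|_\infty\le O(\delta)$; hence for $\delta$ small in terms of $\eta$ a coordinatewise $(1\pm\eta)$-approximation $\mathcal{O}_n$ lands, in coordinate $r$, strictly inside either $(1\pm\eta)(1\pm O(\delta))$ or $(1\pm\eta)(c_\eta\pm O(\delta))$ according to the encoded bit, and these intervals are disjoint. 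So Bob reads off $v_{t^\ast,r^\ast}$, i.e.\ $x_j$.

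\textbf{Conclusion and main obstacle.} Fixing $\mathcal{A}$'s internal randomness by an averaging (Yao) argument against the standard hard distribution for \textsc{Index} (uniform $x$, uniform $j$) turns the above into a deterministic one-way protocol of cost $s$ that errs with probability at most $1/10$; the distributional lower bound for \textsc{Index} then yields $s=\Omega(M)=\Omega\big(d\cdot\min\{n,e^d\}\big)$, and the global $9/10$ success probability of $\mathcal{A}$ is comfortably enough to drive this. I expect the main obstacle to be the packing step: getting the number of near-orthogonal key directions all the way up to $\Theta(\min\{n,e^d\})$ --- rather than the $e^{\Omega(d)}$ that a crude bound yields --- while simultaneously keeping the inner-product gap $\Delta$ large enough that the softmax concentrates sharply enough on a single token for the \emph{multiplicative, coordinatewise} guarantee to still recover a full $\Theta(d)$-bit block, all with a query of only polynomially bounded norm. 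A secondary point of care is checking that the value-zero padding tokens, needed to fix the stream length at exactly $n$, do not spoil the softmax normalization (which is why they are given zero keys, contributing only a bounded additive term that $\Lambda$ is chosen to dominate).
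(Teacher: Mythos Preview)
Your proposal is correct and follows essentially the same route as the paper: a one-way reduction from \textsc{Index} in which Alice packs her bits into the value vectors, the keys are a near-orthogonal family (the paper uses JL projections of the standard basis of $\mathbb{R}^n$, you use a sphere packing), and Bob inserts a scaled query that makes the softmax spike at a single index so that the approximate attention output reveals the desired block of bits. Your use of value entries in $\{1,c_\eta\}$ rather than the paper's $\{0,1\}$, the explicit zero-key/zero-value padding to reach length $n$, and the unified handling of both regimes via a single packing size $m=\Theta(\min\{n,e^d\})$ (the paper instead splits into the high-dimensional Theorem~\ref{thm:lower-bound-appx} and the low-dimensional Corollary~\ref{cor:lower-bound-low-dim}) are cosmetic variations only.
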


We establish this result by constructing a reduction from a well-known \textit{communication complexity} problem, utilizing \textit{Johnson-Linderstrauss (JL)} random projections and the probabilistic method to create a communication protocol. Our proof relies on the malleability of the softmax distribution, as we use it to approximate a ``Dirac''-like spike on a discretized support set. A similar proof technique was used by \cite{sanford2024representational}, where they used the restricted isometry property from the theory of compressed sensing to analyze the representational capabilities of Transformer models. 

Theorem \ref{thm:lower-bound} is tight for the low-dimensional regime as well: we show that the \textsc{SubGen} algorithm of \citet{Zandieh2024} achieves optimal space utilization for $d = o(\log n)$  (Theorem \ref{thm:subgen-optimal} and Corollary \ref{cor:lower-bound-low-dim}). Our proof relies on a well-known bound for the covering number of the $\ell_2$-ball to give a tight characterization of the clusterability of low-dimensional embedding spaces. Our results imply that space-efficient token generation is only possible if the embedding dimension $d$ scales sub-logarithmically with the number of tokens $n$, a regime in which Transformers unfortunately start to lose their expressivity \citep{sanford2024representational}.

We also investigate the impact of \textit{sparsity} on the space complexity of KV cache compression. 
First, we formally demonstrate that \textit{unstructured sparsity} cannot achieve sublinear space utilization (Corollary \ref{cor:sparsity-not-enough}), a surprising observation given the widespread success of sparsity-based algorithms in the literature. 
Next, we examine the \textit{sliding window} attention mechanism, a very popular fixed sparsity pattern, in a more general setting, where value vectors outside the sliding window are not assumed to be masked. This scenario extends beyond the typical assumptions in the literature, and no sublinear space algorithm has been proposed for it so far.
In this setting, we present a novel algorithm (Algorithm \ref{alg:sliding-window}) for KV cache compression, achieving sublinear space utilization under mild assumptions. We establish that our algorithm is nearly space-optimal by proving a matching lower bound (Theorem \ref{thm:lower-bound-sliding-window}).

Finally, we study the time complexity of non-adaptive streaming algorithms, which are algorithms that pre-decide their data access pattern before seeing the stream. Such algorithms are common in the literature, as they are more straightforward to implement (e.g. fixed sparsity patterns).
%
% For instance, fixed sparsity patterns are common KV cache compression algorithms. 
%
% We prove that there is no streaming algorithm that calculates attention in an online manner, even approximately, and runs in sublinear time for \textit{every} incoming token. 

\begin{theorem}
    Let $\mathcal{A}$ be a streaming algorithm for attention-based token generation that decides before seeing any data which entries of $QK^T$ to access.
    Suppose that $\mathcal{A}$, with probability at least $9/10$, outputs for each $i \in [n]$ an $(1\pm\eta)$-approximation $\mathcal{O}_i \in \mathbb{R}^d$ of $Z_i := \operatorname{Attn}(q_i, K_i, V_i)$. Then $\mathcal{A}$ must take $\Omega(nd)$ time to process the last token $(q_n,k_n,v_n)$ in the worst case.
\end{theorem}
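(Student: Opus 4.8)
The plan is to argue by contradiction and to isolate a single entry of $QK^T$ that a fast non-adaptive algorithm is forced to ignore, then to exploit the malleability of the softmax (exactly as in the proof of Theorem~\ref{thm:lower-bound}) to plant a ``Dirac''-like spike at precisely that entry. Suppose $\mathcal{A}$ processes the final token in $o(nd)$ time. Since $\mathcal{A}$ is non-adaptive, the set $T\subseteq[n]$ of inner products $q_n^{\top}k_\ell$ (equivalently, entries of the last row of $QK^T$) that it consults while handling token $n$ is fixed before any data is seen, and evaluating each such inner product costs $\Theta(d)$ time; hence $|T|=o(n)$, and for $n$ large enough there is an index $\ell^\ast\in[n-1]\setminus T$. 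Intuitively, $\mathcal{A}$ commits to this blind spot in advance, and we will build two inputs that agree everywhere $\mathcal{A}$ looks but disagree wildly in $Z_n$.

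Next I would construct the hard pair of instances. Take $q_i=0$ for all $i\le n-1$, $k_i=0$ for all $i\ne \ell^\ast$ (including $k_n=0$), $k_{\ell^\ast}=e_1$, and on the value side $v_{\ell^\ast}=\mathbf{1}$ with $v_\ell=0$ otherwise. Instance $\mathsf{A}$ sets $q_n=M e_1$ for a large $M$ (say $M=\Theta(\log n)$ or larger), while instance $\mathsf{B}$ sets $q_n=0$; the two instances share the same $V$ and differ only in the single entry $(n,\ell^\ast)$ of $QK^T$, since every other inner product involves a zero vector. A direct computation gives $Z_n^{\mathsf{A}}=\tfrac{e^{M}}{e^{M}+n-1}\,\mathbf{1}$, which is $(1-o(1))\mathbf{1}$ for $M$ large, whereas $Z_n^{\mathsf{B}}=\tfrac1n\mathbf{1}$. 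For any fixed constant $\eta\in(0,1)$ and all sufficiently large $n$, the coordinate-wise $(1\pm\eta)$-approximation intervals around these two vectors are disjoint, so no single output $\mathcal{O}_n\in\mathbb{R}^d$ can be a valid approximation of both.

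The last step is the indistinguishability argument. The only entry of $QK^T$ on which $\mathsf{A}$ and $\mathsf{B}$ differ is $(n,\ell^\ast)$, and $\ell^\ast\notin T$, so it is never consulted; every entry of $QK^T$ that $\mathcal{A}$ does consult (in this round and in all earlier rounds, noting that every column of $QK^T$ outside its last row is identically zero in both instances) and the entire matrix $V$ are the same in the two executions. Running $\mathcal{A}$ with the same random coins on $\mathsf{A}$ and on $\mathsf{B}$ therefore induces the same distribution on $\mathcal{O}_n$. But correctness requires $\Pr[\mathcal{O}_n\text{ is a }(1\pm\eta)\text{-approximation of }Z_n^{\mathsf{A}}]\ge 9/10$ and likewise for $Z_n^{\mathsf{B}}$, and these are disjoint events under one and the same distribution---impossible. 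Hence $\mathcal{A}$ cannot process the last token in $o(nd)$ time for all inputs, giving the $\Omega(nd)$ worst-case bound.

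I expect the main obstacle to be pinning down the computational model precisely enough that the indistinguishability is airtight: the reduction needs $\mathcal{A}$'s only access to $Q$ and $K$ to be the pre-committed set of $QK^T$ entries (which is how I read ``decides before seeing any data which entries of $QK^T$ to access''), and it needs the counted unit of time to be exactly those $\Theta(d)$-cost inner-product evaluations, so that $o(nd)$ time really does force $|T|=o(n)$. Non-adaptivity is essential here---an adaptive algorithm could read one entry, detect the spike, and redirect its remaining queries---so I would state the model carefully before running the reduction. A secondary point to verify is that setting $q_i=0$ for $i<n$ genuinely confines the $\mathsf{A}$/$\mathsf{B}$ discrepancy to the single entry $(n,\ell^\ast)$ across \emph{all} timesteps, which it does.
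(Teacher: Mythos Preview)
Your proposal is correct and uses the same indistinguishability strategy as the paper: both build two inputs that agree on every entry of $QK^T$ the algorithm reads and disagree on one unread entry of the last row, then show the two attention outputs are too far apart to share a $(1\pm\eta)$-approximation. Your hard instance is in fact cleaner than the paper's---the paper uses a more elaborate value matrix ($v_\ell=\mathbf{1}^d$ for $\ell\le n-\sqrt n$ and $v_\ell=\sqrt n\,\mathbf{1}^d$ for the last $\sqrt n$ tokens) and compares a \emph{uniformly random} spiked instance $\mathcal{H}(i)$ against a flat reference $\sigma$, obtaining only a factor-$2$ separation, whereas your single nonzero $v_{\ell^\ast}$ yields a factor-$n$ gap directly.

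One point to tighten: when $\mathcal{A}$ is randomized, the query set $T$ is itself a random variable, so choosing $\ell^\ast\notin T$ after fixing the coins makes the instances $\mathsf{A},\mathsf{B}$ depend on those coins, and the per-instance $9/10$ guarantee no longer applies to them as stated. The fix is standard (and is precisely what the paper's uniform choice of $i$ accomplishes): by averaging over the at most $o(n)$ queries there is a \emph{fixed} index $\ell^\ast\in[n-1]$ with $\Pr[\ell^\ast\in T]=o(1)$; build $\mathsf{A},\mathsf{B}$ from this fixed $\ell^\ast$ and then union-bound the events ``good on $\mathsf{A}$,'' ``good on $\mathsf{B}$,'' and ``$\ell^\ast\notin T$'' to reach the contradiction. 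Your closing paragraph already anticipates the model issue (that $\mathcal{A}$'s only access to $Q,K$ is through the pre-committed $QK^T$ entries), and the paper makes exactly the same assumption implicitly---its $q_n$ also differs across the two instances.
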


We show this result by using an \textit{indistinguishability} argument, with which we show that sampling from a softmax distribution requires $\Omega(n)$ queries to the distribution in the worst case. 
Such techniques are very widespread in the literature of sublinear algorithms and testing.

\section{Preliminaries}
\label{sec:prelims}
\subsection{The Attention Mechanism and KV Caching}
We can formalize the KV Cache Compression problem as follows: The input is a stream of triples $(q_i,k_i,v_i) \in (\mathbb{R}^d)^3$, where $d$ is the embedding dimension. After each stream entry, define the \textbf{Attention function} as:
\begin{align}
    \text{Attn}(q_i,K_i,V_i) := \sigma_i(K_i\cdot q_i)^T \cdot V_i \in \mathbb{R}^d
\end{align}
where:
$$
K_i = \left[\begin{matrix}
    k_1^T\\
    k_2^T\\
    \cdots\\
    k_i^T
\end{matrix}\right]\,\text{ and }\, V_i = \left[\begin{matrix}
    v_1^T\\
    v_2^T\\
    \cdots\\
    v_i^T
\end{matrix}\right]
$$
are two $i\times d$ matrices, and $\sigma_i :\mathbb{R}^i\to \mathbb{R}^i$ is the softmax function with support $[i]$. The $K_i$ and $V_i$ matrices are called the \textbf{key-value (KV) cache}.
The attention function can be viewed as a collection of expected values under suitable softmax distributions \citep{kratsios2021universal,singh2023attention,haris2024knnattentiondemystifiedtheoretical}. Let $D_i$ be the softmax distribution over $[n]$ corresponding to the values $q_i^T k_1,...,q_i^T k_n$. Then it holds that:
\begin{align}
\label{eq:expectation-based-reformulation}
\text{Attn}(q_i,K_i,V_i)= \E\limits_{\ell \sim D_i}[V_{\ell}]
\end{align}
If the stream has length $n$, then it is easy to compute this function exactly for every stream update in $O(nd)$ space by storing the $K_i$ and $V_i$ matrices explicitly. 
% We will use this reformulation to analyze Algorithm \ref{alg:sliding-window}.

\subsection{Johnson-Linderstrauss (JL) Random Projections}
Our analysis makes extensive use of the distributional Johnson-Linderstrauss (JL) lemma (see Appendix \ref{appx:distributional-jl}), which is a fundamental result in the theory of dimensionality reduction. Specifically, we utilize the ability of the standard JL transform to approximately preserve inner products between vectors. The following lemma encapsulates this property:
\begin{lemma}
    \label{lem:inner-product-jl-for-all}
    Suppose we have $n$ points $p_1,...,p_n \in \mathbb{R}^n$ such that $||p_i||_2 \leq 1$ for all $i \in [n]$. Let $f:\mathbb{R}^n\to \mathbb{R}^d$ be a random mapping defined as $f(u) = \frac{1}{\sqrt{d}}Au$ where $A \in \mathbb{R}^{d\times n}$ is a random matrix with its entries drawn independently from a standard normal distribution. Then setting: 
    \begin{align}
        d \geq \frac{12 \ln n}{\varepsilon^2 - \varepsilon^3}
    \end{align}
    allows us to guarantee that with probability at least $1-\frac{1}{n}$ it holds for all $(i,j) \in [n]^2$ that:
    \begin{align}
    |p_i^T p_j - f(p_i)^T f(p_j)| \leq \varepsilon
    \end{align}
\end{lemma}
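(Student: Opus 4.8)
The plan is to derive this from the single-vector distributional JL concentration statement (Appendix~\ref{appx:distributional-jl}) via the polarization identity and a union bound, which is the standard route for promoting norm preservation to inner-product preservation.

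First I would recall the distributional guarantee in the form: for any fixed $u \in \mathbb{R}^n$, the random map $f$ satisfies $\Pr\big[\,\big|\,\|f(u)\|_2^2 - \|u\|_2^2\,\big| > \varepsilon\,\|u\|_2^2\,\big] \le 2\exp\!\big(-(\varepsilon^2-\varepsilon^3)d/4\big)$. Since $f$ is linear we have $f(p_i)\pm f(p_j) = f(p_i\pm p_j)$, so the polarization identity yields
\[
f(p_i)^T f(p_j) - p_i^T p_j = \tfrac14\Big(\big(\|f(p_i+p_j)\|_2^2 - \|p_i+p_j\|_2^2\big) - \big(\|f(p_i-p_j)\|_2^2 - \|p_i-p_j\|_2^2\big)\Big).
\]
Conditioning on the event that $f$ preserves the squared norms of the two vectors $p_i+p_j$ and $p_i-p_j$ up to a multiplicative $(1\pm\varepsilon)$ factor, the right-hand side is bounded in absolute value by $\tfrac{\varepsilon}{4}\big(\|p_i+p_j\|_2^2 + \|p_i-p_j\|_2^2\big)$. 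The parallelogram law gives $\|p_i+p_j\|_2^2 + \|p_i-p_j\|_2^2 = 2\|p_i\|_2^2 + 2\|p_j\|_2^2 \le 4$, so this is at most $\varepsilon$, exactly the claimed bound. The diagonal case $i=j$ is just the norm-preservation statement applied directly to $p_i$ (whose squared norm is $p_i^Tp_i$).

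Then I would take a union bound over the set of vectors whose squared norms must be controlled, namely $\{p_i : i\in[n]\}\cup\{p_i+p_j,\ p_i-p_j : i<j\}$, which has at most $n^2$ elements. With $d \ge \frac{12\ln n}{\varepsilon^2-\varepsilon^3}$ each individual failure probability is at most $2e^{-3\ln n} = 2n^{-3}$, so the total failure probability is at most $2n^{-1}$; absorbing this factor-$2$ slack into the constant (equivalently, taking the constant in the bound on $d$ slightly above $12$) gives the stated $1/n$. On the complementary event, every pair $(i,j)\in[n]^2$ satisfies $|p_i^Tp_j - f(p_i)^Tf(p_j)| \le \varepsilon$.

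The argument has no genuine obstacle; the only steps requiring care are (i) the algebraic conversion of the multiplicative JL error into the additive $\varepsilon$ bound — this is where the parallelogram identity and the hypothesis $\|p_i\|_2\le 1$ do the work, and it is precisely what lets us avoid rescaling the vectors $p_i\pm p_j$ (whose norms can be as large as $2$) before invoking the lemma — and (ii) the bookkeeping of constants so that the $O(n^2)$-size union bound still leaves failure probability at most $1/n$ under the stated lower bound on $d$.
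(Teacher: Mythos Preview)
Your proposal is correct and follows essentially the same route as the paper: the paper first packages the polarization/parallelogram step into a single-pair lemma (Lemma~\ref{lem:inner-product-jl-lemma}) and then union-bounds over the $n^2$ pairs, whereas you carry out both steps in one pass, but the underlying argument is identical. The small constant-factor slack you flag in step (ii) is also present (and similarly glossed over) in the paper's proof.
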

\begin{proof}
The proof can be found in Appendix \ref{appx:inner-product-jl-for-all}.
\end{proof}

\subsection{Communication Complexity Fundamentals}
In the most basic communication model \citep{rao2020communication}, Alice and Bob possess two $n$-bit strings, $x$ and $y$ respectively. 
They wish to compute a function $f(x, y)$ by exchanging information in a certain number of rounds. 
They both have access to unlimited
computing resources, and their goal is to minimize the total amount of communication, in terms of bits, that they exchange in a fixed number of rounds.
A (potentially randomized) scheme for communicating a function $f$ is called a communication \textbf{protocol}. 
The \textbf{cost} of a protocol is the number of bits exchanged between Alice and Bob in the worst case.
We say that a (randomized) protocol has error $\varepsilon$ if its error probability over its randomness for any input pair
$(x, y)$ is at most $\varepsilon$.

\paragraph{The \textsc{Index} communication problem} The \textsc{Index} problem is a classic communication complexity problem, whose hardness is crucial in our lower bound proofs.

\begin{definition}
In the \textsc{Index} problem, Alice holds a bit string $x \in \{0,1\}^n$ and Bob holds an index $i \in [n]$. Alice sends a single message (one-way) $M \in \{0,1\}^*$ to Bob, whose goal is to output $x_i$ with probability at least $2/3$.
\end{definition}

The hardness of this problem is a well-known result in communication complexity. For completeness, we provide a proof of this result in Appendix \ref{appx:index-lb}.

\begin{theorem}
\label{thm:index-lb}
The one-way, randomized communication complexity of \textsc{Index} is $\Omega(n)$.
\end{theorem}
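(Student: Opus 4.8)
The plan is to give the standard information-theoretic argument, routed through a hard input distribution. First I would pass from the randomized worst-case setting to a deterministic average-case one by the easy direction of Yao's principle: if there is a randomized one-way protocol of cost $c$ that errs with probability at most $1/3$ on \emph{every} input, then fixing its randomness to its best value yields a \emph{deterministic} one-way protocol of cost at most $c$ that errs with probability at most $1/3$ when $X$ is drawn uniformly from $\{0,1\}^n$ and $I$ is drawn uniformly and independently from $[n]$. So it suffices to show that any such deterministic protocol has $c = \Omega(n)$.

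Write $M = M(X)$ for Alice's (deterministic) message and $g(M,I)$ for Bob's output. Since a message of cost $c$ ranges over at most $2^{c+1}$ strings, $c \geq H(M) - O(1) \geq I(X;M) - O(1)$, so it is enough to prove $I(X;M) = \Omega(n)$. Because $X = (X_1,\dots,X_n)$ is a string of independent uniform bits, mutual information is superadditive over the coordinates: expanding $I(X;M)$ by the chain rule and using $I(X_j; X_{<j}) = 0$ gives $I(X;M) \geq \sum_{j=1}^{n} I(X_j;M)$. The point of this step is precisely that it forces the use of the product distribution as the hard instance.

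Next I would bound each term. Fix a coordinate $j$; then $g(M,j)$ is a function of $M$ alone that predicts the single bit $X_j$, with error $\delta_j := \Pr[g(M,j) \neq X_j]$. Fano's inequality for a binary target gives $H(X_j \mid M) \leq H_b(\delta_j)$, where $H_b$ is the binary entropy function, so $I(X_j;M) = 1 - H(X_j\mid M) \geq 1 - H_b(\delta_j)$. Summing over $j$ and using concavity (Jensen) of $H_b$ together with the success guarantee $\frac1n\sum_j \delta_j \leq 1/3$ and monotonicity of $H_b$ on $[0,1/2]$, we obtain $I(X;M) \geq \sum_j\bigl(1 - H_b(\delta_j)\bigr) \geq n\bigl(1 - H_b(1/3)\bigr) = \Omega(n)$, since $H_b(1/3) = \log 3 - \tfrac23 < 1$. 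Chaining $c \geq H(M) - O(1) \geq I(X;M) - O(1) = \Omega(n)$ completes the proof.

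The delicate points are essentially bookkeeping rather than conceptual: checking that the averaging step preserves the worst-case communication cost while only weakening correctness to average-case; verifying that superadditivity of $I(\cdot;M)$ genuinely uses independence of the $X_j$; and invoking Fano with the correct conditioning, namely that $g(M,j)$ depends on $M$ and the fixed index but not on the other bits of $X$. An alternative would be a purely combinatorial reconstruction / fooling-set argument (if $c$ is too small, two distinct strings collide on the same message and Bob cannot distinguish them on some coordinate), but the entropy route above is the shortest and is the one I would write up in the appendix.
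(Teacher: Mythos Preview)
Your proposal is correct and follows essentially the same information-theoretic route as the paper's proof: uniform hard distribution on $X$, Fano's inequality to bound $H(X_j \mid M)$, and the chain rule to lower-bound $I(X;M)$, finishing with $|M| \geq I(X;M) = \Omega(n)$. The only cosmetic difference is that the paper works directly with the randomized protocol (so each per-coordinate error is already $\leq 1/3$ and no Jensen step is needed), whereas you first fix the randomness via Yao and then average over coordinates; both are standard and yield the same bound.
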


\subsection{Reservoir Sampling in Streaming}
Reservoir sampling is a widely-used method for sampling in one pass and uniformly at random an element from a data stream in $\widetilde{O}(1)$ memory without knowing the stream's length in advance.
%
% We will use it to design Algorithm \ref{alg:sliding-window}.
%
The $i$-th item is
sampled with probability $\frac{1}{i}$. If it is not sampled, the previously sampled item is retained.
Under this scheme, the probability that any single element is ultimately sampled is $\frac{1}{|\sigma|}$, where $\sigma$ is the length of the stream.

\section{Sublinear Space for KV Cache Compression is Impossible}
\label{sec:lower-bound-main}
In this section, we prove our main lower bound result. We cover the case of exact computation $(\eta = 0)$ on the high-dimensional $(d = \Omega(\log n))$ regime to illustrate the key-ideas of the proof. We later extend this proof with slightly more complicated arguments to the general case. 
\begin{theorem}
    \label{thm:lower-bound-exact}
    There exists some universal constant $C_u > 1$ such that for $d \geq C_u\log n$, any algorithm that can, with probability at least $9/10$ produce an output $o_n 
    \in \mathbb{R}^d$ such that:
    \begin{align}
        o_n = \operatorname{Attn}(q_n,K_n,V_n)
    \end{align}
    must use at least $\Omega(nd)$ bits of memory.
    \end{theorem}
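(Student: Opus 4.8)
The plan is to reduce from the \textsc{Index} problem (Theorem~\ref{thm:index-lb}). Suppose Alice holds $x \in \{0,1\}^m$ and Bob holds an index $j \in [m]$; we want a one-way protocol that lets Bob recover $x_j$ using a small-memory attention-streaming algorithm as a black box, so that a memory bound of $o(md)$ bits would contradict the $\Omega(m)$ communication lower bound. Here $m$ should be chosen as $\Theta(n)$ (and we will have $d = \Theta(\log n)$), so that $o(nd)$ memory translates to $o(m)$ communication after accounting for the $O(\log n)$ factor from $d$. Concretely, Alice will feed a prefix of the token stream $(q_i,k_i,v_i)$ into the algorithm, encoding her string $x$ into the value vectors $v_i$; she then sends the algorithm's memory state to Bob, who appends one more query token $q_n$ engineered so that $\operatorname{Attn}(q_n,K_n,V_n)$ reads off the coordinate $x_j$.

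The key construction is a set of ``nearly orthonormal'' key embeddings that let a single softmax query concentrate almost all its mass on one chosen index. First I would take $m$ (roughly) orthonormal vectors; since we need them in $\mathbb{R}^d$ with $d = \Theta(\log m)$, I invoke Lemma~\ref{lem:inner-product-jl-for-all}: apply a random JL map to the standard basis $e_1,\dots,e_m$ (scaled to have unit norm), obtaining vectors $\tilde k_1,\dots,\tilde k_m \in \mathbb{R}^d$ with $|\tilde k_\ell^T \tilde k_{\ell'} - \delta_{\ell\ell'}| \le \varepsilon$ for all pairs, with high probability; the probabilistic method guarantees such a matrix exists and can be fixed in advance (shared between Alice and Bob, or derandomized via public randomness in the communication model). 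Now set the key for token $\ell$ to be $k_\ell = R \cdot \tilde k_\ell$ for a large scaling parameter $R$. For a query $q_n = R \cdot \tilde k_j$, the logits are $q_n^T k_\ell = R^2 \tilde k_j^T \tilde k_\ell$, which is $\approx R^2$ for $\ell = j$ and at most $\approx \varepsilon R^2$ for $\ell \ne j$; choosing $R$ large enough (polynomial in $m$ suffices, needing $O(\log m) = O(d)$ bits to specify each coordinate) makes the softmax weight $w(q_n, k_j) \ge 1 - 1/\operatorname{poly}(m)$, so $\operatorname{Attn}(q_n,K_n,V_n) \approx v_j$ up to tiny error. For the exact case $\eta = 0$ I would instead push the separation so that the dominant term forces the output, or, more cleanly, argue that even an exact oracle queried on this instance has output within $1/\operatorname{poly}(m)$ of $v_j$ in each coordinate and that this precision is enough — but since the statement here is the exact version, I would actually encode $x$ redundantly: let $v_\ell = x_\ell \cdot \mathbf{1} \in \mathbb{R}^d$ (all-ones scaled by the bit), and note that even a $(1/3)$-fraction of the softmax mass landing on index $j$ lets Bob distinguish $v_j = \mathbf{0}$ from $v_j = \mathbf{1}$ by thresholding any coordinate of the output at $1/2$ — so we don't need the mass arbitrarily close to $1$, just bounded away from $1/2$, which a modest $R$ achieves, and exact computation only helps.

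Putting the protocol together: Alice processes tokens $1,\dots,m$ with $(q_i, k_i, v_i) = (\star, k_i, x_i \mathbf 1)$ where $\star$ is an arbitrary query (say $q_i = 0$, whose attention output we discard), sends the $S$-bit memory state of the streaming algorithm to Bob; Bob processes token $m+1$ with $q_{m+1} = R\tilde k_j$ and arbitrary $k_{m+1}, v_{m+1}$ chosen to not interfere (e.g. $k_{m+1}$ orthogonal-ish with small norm, $v_{m+1} = 0$), reads the algorithm's claimed output $o_n$, and outputs ``$1$'' iff its first coordinate exceeds $1/2$. Correctness holds with probability $\ge 9/10$ over the algorithm (conditioned on the good JL event, which has its own high probability and can be absorbed), beating the $2/3$ threshold; hence $S = \Omega(m) = \Omega(n)$ as a number of memory bits, and since this already counts bits (not words), and we chose $d = \Theta(\log n)$, we get $S = \Omega(nd / \log n)$... which is short of $\Omega(nd)$. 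To get the full $\Omega(nd)$ I would instead encode a \emph{longer} string: pack $\Theta(d)$ independent \textsc{Index} instances in parallel into the $d$ coordinates of the value vectors, i.e. let $v_\ell \in \{0,1\}^d$ hold bits from $d$ different strings $x^{(1)},\dots,x^{(d)} \in \{0,1\}^n$, have Bob's single query read out the entire vector $v_j = (x^{(1)}_j,\dots,x^{(d)}_j)$ at once (the softmax concentration argument is coordinate-agnostic), so one run of the algorithm solves $d$ instances of \textsc{Index} on $n$-bit strings; a standard direct-sum / information-complexity argument for \textsc{Index} then forces $S = \Omega(nd)$ bits.

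The main obstacle I expect is controlling the softmax concentration precisely enough in the presence of $n$ near-orthogonal-but-not-exactly-orthogonal keys: the ``noise'' logits are each $\le \varepsilon R^2$ but there are $n-1$ of them, contributing total weight $\le n e^{-\,(1-\varepsilon) R^2}$ relative to the spike, so $R$ must be chosen with $R^2 = \Theta(\log n)$ at minimum — and then each key coordinate carries magnitude $\operatorname{poly}(n)$, so describing $q_n$ and the $k_\ell$'s to the needed precision costs $O(\log n)$ bits per coordinate; one has to check this bit-precision overhead is consistent with a \emph{clean} streaming model (real-valued inputs, or inputs given to $O(\log n)$ bits of precision) and does not secretly inflate the memory we are trying to lower-bound. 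A secondary subtlety is making Bob's extra token $(q_{m+1}, k_{m+1}, v_{m+1})$ truly harmless — its key enters every future softmax, so we must ensure its logit against $q_{m+1}$ itself doesn't dominate; picking $k_{m+1} = 0$ (logit $0$, weight $O(1/n)$ times the spike) handles this. I'd also want to double-check the $(1\pm\eta)$-approximation definition in the footnote — it is multiplicative and one-sided-looking — interacts correctly with value vectors that can be exactly $0$; using $v_j \in \{0,1\}^d$ and thresholding sidesteps this since a $(1\pm\eta)$-approximation of $0$ is exactly $0$ and of $1$ is in $[1-\eta,1+\eta]$, still distinguishable.
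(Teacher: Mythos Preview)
Your approach is correct and matches the paper's: reduce from \textsc{Index}, use a JL projection of the standard basis to get near-orthonormal keys in $\mathbb{R}^d$, scale the query so the softmax spikes at Bob's index, encode Alice's bits in the value vectors, and threshold a coordinate of the attention output. The one place you over-complicate things is the final step: you frame the $\Omega(nd)$ bound as ``pack $d$ parallel \textsc{Index} instances and invoke direct-sum,'' but no direct-sum argument is needed. Once you let $v_\ell \in \{0,1\}^d$ carry $d$ bits, Alice's total input is an $n\times d$ bit matrix, i.e.\ a single $nd$-bit string; Bob's index is a pair $(i,j)\in[n]\times[d]$, and his single query recovers the entire row $v_i$, from which he reads off coordinate $j$. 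This is just one instance of \textsc{Index} on an $nd$-bit string, so Theorem~\ref{thm:index-lb} gives $\Omega(nd)$ directly---exactly what the paper does. (The paper also scales only Bob's query by the constant $C$ rather than both keys and query by $R$; either works.)
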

\begin{proof}
    We construct a reduction from \textsc{Index}, in which Alice holds a bit string $x \in \{0,1\}^{n\times d}$ and Bob holds an index $(i,j) \in [n]\times [d]$. Alice sends a single message to Bob, whose goal is to output $x_{ij}$ with probability at least $2/3$. We know by Theorem \ref{thm:index-lb} that the one-way, randomized communication complexity of this problem is $\Omega(nd)$. Our goal is to design a protocol for \textsc{Index} by using a supposed algorithm $\mathcal{A}$ for calculating the attention function that uses $S$ bits of memory. Having that reduction in place, Alice simply communicates these $S$ bits of the algorithm's memory tape to Bob, allowing us to show that $S = \Omega(nd)$, and thus proving the theorem.

    We now describe the protocol for \textsc{Index} using $\mathcal{A}$ and explain the memory lower bound in more detail.

    \paragraph{Alice} Alice begins by inserting the following $n$ triples $\{(q_i,k_i,v_i)\}_{i=1}^n$ of vectors in $\mathbb{R}^d$ to the streaming algorithm $\mathcal{A}$:
    \begin{itemize}
        \item $q_1,...,q_n$ are all the zero vector in $\mathbb{R}^d$, and they do not affect the final output.
         \item $k_1,...k_n \in \mathbb{R}^d$ are calculated before the protocol starts (and agreed upon by both Alice and Bob) as $d$-dimensional projections of the orthonormal basis $e_1 = (1,0,...,0),...,e_n=(0,0,...,1)$ of $\mathbb{R}^n$ in a way that approximately preserves orthonormality\footnote{Equivalently, it suffices to use the JL transform in conjunction with the probabilistic method to show that such vectors exist, as Alice and Bob can pre-compute them before communicating.}. Specifically, we can invoke Lemma \ref{lem:inner-product-jl-for-all} to produce $k_1,...,k_n \in \mathbb{R}^d$ such that with probability at least $1-\frac{1}{n}$ it holds for all $i\neq j$ that:
        $$
        |k_i^T k_j| \leq \varepsilon
        $$
        and for all $i \in [n]$ that:
        $$
        |k_i^T k_i - 1| \leq \varepsilon
        $$
        We do this by letting $f(x) = \frac{1}{\sqrt{d}}Ax$ where $A \in \mathbb{R}^{n\times d}$ is a JL random matrix and defining $k_i = f(e_i)$. Crucially, orthonormality is preserved because $d = \Omega(\log n)$\footnote{If $d \gg \log n$, we can pad the remaining dimensions with zeroes without affecting this construction.}. We resolve the correct value for $\varepsilon$ later in the proof.
        \item $v_1,...,v_n \in \mathbb{R}^d$ contain the rows of $x \in \{0,1\}^{n\times d}$. In other words, Alice uses $V_n$ to store her input $x$ through $\mathcal{A}$:
        \begin{align}
            V_n := x
        \end{align}
    \end{itemize}
    
    After inserting these $n$ triples into $\mathcal{A}$, Alice observes $\mathcal{A}$'s execution and sends its memory state, consisting of $S$ bits, to Bob. This allows Bob to continue the execution of $\mathcal{A}$ exactly where Alice left off, without, of course, having any additional knowledge of $x$.
    
    \paragraph{Bob} Recall that Bob's input is an index $(i,j)$ into $x$. In our protocol, Bob only enters a single triple $(q,\vec{0},\vec{0})$ into $\mathcal{A}$. He defines:
    \begin{align}
        q := C\cdot k_i= C\cdot f(e_i) \in \mathbb{R}^d
    \end{align}
    where $C$ is a positive value to be determined later\footnote{This scaling trick was introduced by \citep{Keles2022} in their fine-grained complexity reductions to show quadratic lower bounds for self-attention computation. It is also used implicitly in \citep{yun2020n}.}. Now, we claim that Bob can recover the value of $x_{ij}$ from the output $o_{n+1}$ that $\mathcal{A}$ produces, which is equal to $\text{Attn}(q,K_{n+1},V_{n+1}) \in \mathbb{R}^d$. We have that:
    \begin{align}
    \text{Attn}(q,K_{n+1},V_{n+1}) = \sigma_{n+1}(K_{n+1}\cdot q)^T\cdot V_{n+1}
    \end{align}
    Starting with $s := K_{n+1}\cdot q \in \mathbb{R}^{n+1}$, we know with probability at least $1-\frac{1}{n}$ that this is a vector in $\mathbb{R}^{n+1}$ with the property that $s_\ell$ is close to $0$ for $\ell\neq i$ and $s_i$ is close to $C$. Specifically, it holds with probability at least $1-\frac{1}{n}$ that:
    \begin{align}
        s_\ell \leq C\varepsilon\,\text{ for } \ell \neq i\,\text{ and }s_{i} \geq C(1-\varepsilon)
    \end{align}
    This is also true vacuously for $\ell = n+1$ as $s_{n+1} = 0$ by construction. Now, let $\xi := \sigma_{n+1}(s) \in \mathbb{R}^{n+1}$. We can see that the softmax vector $\xi$ ``spikes'' at index $i$. We can use this spike to read off the value of $x_{i,j}$ via the $V$ matrix. Let us calculate the product $\xi^T \cdot V_{n+1}$. This is a vector in $\mathbb{R}^d$, whose $j$-th entry is:
    \begin{align}
    (\xi^T\cdot V_{n+1})_j = \sum\limits_{\ell=1}^{n} x_{\ell j}\cdot \xi_{\ell} = x_{ij}\xi_i+\sum\limits_{\ell\neq i}x_{\ell j}\xi_\ell
    \end{align}
    as the last row of $V_{n+1}$ is the zero vector. We can now examine two separate cases:
    \begin{itemize}
        \item \textbf{Case 1: }$x_{ij} = 0$. Then we have that:
        \begin{align}
        (\xi^T\cdot V_{n+1})_j &= \sum\limits_{\ell\neq i}x_{\ell j}\xi_\ell = \frac{\sum\limits_{\ell \neq i}x_{\ell{j}}e^{s_\ell}}{e^{s_i}+\sum\limits_{\ell\neq i}e^{s_\ell}} \leq \frac{\sum\limits_{\ell \neq i}e^{s_\ell}}{e^{s_i}+\sum\limits_{\ell\neq i}e^{s_\ell}} 
        \end{align}
        The function $\frac{x}{x+y}$ is maximized when $x$ is maximized and $y$ is minimized, which allows us to bound:
        \begin{align}
        (\xi^T\cdot V_{n+1})_j \leq \frac{ne^{C\varepsilon}}{ne^{C\varepsilon} + e^{C(1-\varepsilon)}}:=\delta
        \end{align}
        \item \textbf{Case 2: }$x_{ij} = 1$. Then we have that:
        \begin{align}
        (\xi^T\cdot V_{n+1})_j &= \xi_i+\sum\limits_{\ell\neq i}x_{\ell j}\xi_\ell = \frac{e^{s_i}+\sum\limits_{\ell \neq i}x_{\ell{j}}e^{s_\ell}}{e^{s_i}+\sum\limits_{\ell\neq i}e^{s_\ell}} \geq \frac{e^{s_i}}{e^{s_i}+\sum\limits_{\ell\neq i}e^{s_\ell}} 
        \end{align}
        Similarly, the function $\frac{x}{x+y}$ is minimized when $x$ is minimized and $y$ is maximized, so we have:
        \begin{align}
            (\xi^T\cdot V_{n+1})_j \geq \frac{e^{C(1-\varepsilon)}}{e^{C(1-\varepsilon)}+ne^{C\varepsilon}}:=\Delta
        \end{align}
    \end{itemize}
    For Bob to always be able to distinguish between the two cases, we want to ensure that $\delta < \Delta$.
    \begin{align}
    \frac{ne^{C\varepsilon}}{ne^{C\varepsilon} + e^{C(1-\varepsilon)}} &<  \frac{e^{C(1-\varepsilon)}}{e^{C(1-\varepsilon)}+ne^{C\varepsilon}} \iff C > \frac{\ln n}{1-2\varepsilon}
    \end{align}
    Thus, we can set $C = \frac{2\ln n}{1-2\varepsilon}$ and $\varepsilon = 0.1$
    to allow Bob to distinguish between $x_{ij} = 1$ and $x_{ij} = 0$ with probability at least $9/10-1/n > 2/3$. Any choice of $\varepsilon \in (0,1/2)$ would work. Overall, we conclude that the number of bits communicated, which equals the memory of $\mathcal{A}$, is bounded below by a factor of $nd$, which concludes our proof.
\end{proof}

\begin{remark}
    Note that in our reduction Bob can actually recover the entire row $x_{i,:}$. This does not imply a better lower bound because the communication complexity is still $\Omega(nd)$.
\end{remark}

\begin{figure}
    \centering

    \begin{tikzpicture}
                \matrix (K) [xshift=-8cm, matrix of nodes,
            nodes={draw, minimum size=0.5cm, anchor=center},
            column sep=-\pgflinewidth, row sep=-\pgflinewidth
        ] {
            |[fill=gray!20 ]| & |[fill=gray!20 ]| & |[fill=gray!20 ]| & |[fill=gray!20 ]| & |[fill=gray!20 ]| & |[fill=gray!20 ]| & |[fill=gray!20 ]| \\
            |[fill=gray!20 ]| & |[fill=gray!20 ]| & |[fill=gray!20 ]| & |[fill=gray!20 ]| & |[fill=gray!20 ]| & |[fill=gray!20 ]| & |[fill=gray!20 ]| \\
            |[fill=orange!40]| & |[fill=orange!40]| & |[fill=orange!40]| & |[fill=orange!40]| & |[fill=orange!40]| & |[fill=orange!40]| & |[fill=orange!40]| \\
            |[fill=gray!20 ]| & |[fill=gray!20 ]| & |[fill=gray!20 ]| & |[fill=gray!20 ]| & |[fill=gray!20 ]| & |[fill=gray!20 ]| & |[fill=gray!20 ]| \\
            |[fill=gray!20 ]| & |[fill=gray!20 ]| & |[fill=gray!20 ]| & |[fill=gray!20 ]| & |[fill=gray!20 ]| & |[fill=gray!20 ]| & |[fill=gray!20 ]| \\
            |[fill=gray!20 ]| & |[fill=gray!20 ]| & |[fill=gray!20 ]| & |[fill=gray!20 ]| & |[fill=gray!20 ]| & |[fill=gray!20 ]| & |[fill=gray!20 ]| \\
            |[fill=gray!40]|0 & |[fill=gray!40 ]|0 & |[fill=gray!40 ]|0 & |[fill=gray!40 ]|0 & |[fill=gray!40 ]|0 & |[fill=gray!40 ]|0 & |[fill=gray!40 ]|0 \\
        };

        % Row labels for K
        \foreach \i in {1,...,3}
            \node[left=0.2cm of K-\i-1] (label\i) {\footnotesize $k_{\i}$};

        \foreach \i in {4,5}
            \node[left=0.2cm of K-\i-1] (label\i) {\footnotesize $\cdot$};

        \foreach \i in {6}
            \node[left=0.2cm of K-\i-1] (label\i) {\footnotesize $k_{n}$};

        \foreach \i in {7}
            \node[left=0.2cm of K-\i-1] (label\i) {\footnotesize $k_{n+1}$};

        % 7x1 Matrix Q (blue)
        \matrix (Q) [matrix of nodes,
            nodes={draw, minimum size=0.5cm, anchor=center, fill=orange!40},
            column sep=-\pgflinewidth, row sep=-\pgflinewidth,
            right=0.8cm of K
        ] {
            |[fill=orange!40]| \\
            |[fill=orange!40]| \\
            |[fill=orange!40]| \\
            |[fill=orange!40]| \\
            |[fill=orange!40]| \\
            |[fill=orange!40]| \\
            |[fill=orange!40]| \\
        };

         \node[left=0.3cm of Q-4-1]  {$\times$};

        % Label for Q on top
        \node[above=0.2cm of Q-1-1] {\footnotesize $q_i \in \mathbb{R}^d$};

        % Arrow pointing to the right with "softmax".
        \draw[thick,->] (Q.east) ++(0.3,0) -- ++(0.8,0) node[midway, above] {\footnotesize softmax};
        % Probability Distribution Curve (Without Axes)
        \begin{scope}[shift={(-3,0)}] % Moves the plot slightly up
        \begin{axis}[
            hide axis, % Removes axes, labels, and ticks
            width=7.5cm,
            height=3.5cm
        ]
            % Smooth curve with spike at x=3
            \addplot[smooth,thick,blue] coordinates {(1,0.05) (2,0.05) (2.4,0.2) (2.6,0.5) (2.95,1) (3.35, 1)(3.65,0.5) (3.85,0.2) (4.15,0.05) (5,0.05) (6,0.05) (7,0.05)};
        \end{axis}
        \end{scope}

        % Array Representation (Perfectly Aligned Below the Plot)
        \matrix (m) [yshift=-1cm, matrix of nodes,
            nodes={draw, minimum size=0.7cm, anchor=center},
            column sep=-\pgflinewidth/2, row sep=-\pgflinewidth/2]
        {
            |[fill=white ]| $x_1$ & |[fill=white ]| $x_2$ & |[fill=orange!40 ]| $x_3$ & |[fill=white ]| $\cdot$ & |[fill=white ]| $\cdot$ & |[fill=white ]| $\cdot$ & |[fill=white ]| $x_n$ \\
        };
    \end{tikzpicture}
    \caption{An illustration of the proof of Theorem \ref{thm:lower-bound-exact}. We construct vectors $f(k_1),...,f(k_n)$ such they are approximately orthogonal using the JL transform. Bob holds an index $i$ (in this case $i=3$) and inserts $q_i = f(k_i)$. In this way, the softmax distribution becomes a ``Dirac''-like spike, allowing Bob to read-off Alice's $x_i$ bit (or vector of bits). Note that in this hard instance, the attention matrix is very sparse.}
\end{figure}
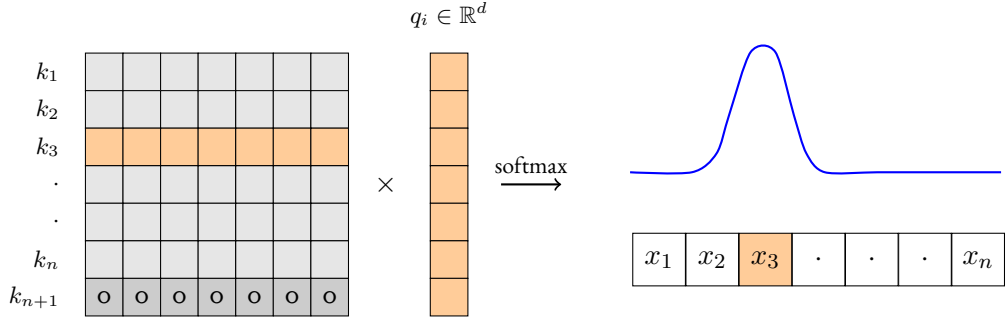

\subsection{Lower bound on approximation algorithms}
Now we extend the above result to the approximate computation of the attention function. 

\begin{theorem}
\label{thm:lower-bound-appx}
Let $Z_n := \operatorname{Attn}(q_n,K_n,V_n)$ and $d = \Omega(\log n)$. Any algorithm that can, with probability at least $9/10$ produce an output $\mathcal{O} \in \mathbb{R}^d$ that is a $(1\pm\eta)$-approximation of $Z_n$
% \begin{align}
%     |\mathcal{O}-Z_n|_{\infty,Z_n} \leq \eta
% \end{align}
for $\eta \in (0,1)$ must use at least $\Omega(nd)$ bits of memory.
\end{theorem}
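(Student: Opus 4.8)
The plan is to reuse, almost verbatim, the reduction from \textsc{Index} in the proof of Theorem~\ref{thm:lower-bound-exact}, changing only the scaling constant and the final case analysis so as to absorb the multiplicative slack $\eta$. As before, Alice encodes her matrix $x\in\{0,1\}^{n\times d}$ into $V_n$ and feeds the supposed algorithm $\mathcal{A}$ the $n$ triples with approximately orthonormal keys $k_1,\dots,k_n=f(e_1),\dots,f(e_n)$ obtained from Lemma~\ref{lem:inner-product-jl-for-all} (this is where $d=\Omega(\log n)$ enters), then transmits $\mathcal{A}$'s $S$-bit memory state to Bob; Bob, holding $(i,j)$, inserts the single triple $(q,\vec 0,\vec 0)$ with $q=Ck_i$ and inspects the $j$-th coordinate $\mathcal{O}_j$ of $\mathcal{A}$'s $(1\pm\eta)$-approximate output.

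The one new ingredient is to notice that the true value $Z_j:=(\xi^T V_{n+1})_j$ is always nonnegative, since the entries of $V$ lie in $\{0,1\}$; hence the approximation guarantee yields $(1-\eta)Z_j\le\mathcal{O}_j\le(1+\eta)Z_j$. Feeding this into the two estimates already established in the proof of Theorem~\ref{thm:lower-bound-exact} — namely $Z_j\le\delta$ when $x_{ij}=0$ and $Z_j\ge\Delta$ when $x_{ij}=1$, with $\delta=\tfrac{ne^{C\varepsilon}}{ne^{C\varepsilon}+e^{C(1-\varepsilon)}}$ and $\Delta=\tfrac{e^{C(1-\varepsilon)}}{e^{C(1-\varepsilon)}+ne^{C\varepsilon}}$ — Bob can recover $x_{ij}$ by a threshold test on $\mathcal{O}_j$ as long as $(1+\eta)\delta<(1-\eta)\Delta$.

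To verify that this is achievable I would use the identity $\delta+\Delta=1$, which turns the requirement $(1+\eta)\delta<(1-\eta)\Delta$ into the single condition $\Delta>\tfrac{1+\eta}{2}$ — a constant strictly below $1$ because $\eta<1$. Since $\Delta=\bigl(1+ne^{-C(1-2\varepsilon)}\bigr)^{-1}\to 1$ as $C\to\infty$, it suffices to pick $\varepsilon=0.1$ and $C=\Theta\!\bigl(\tfrac{\ln n+\ln((1+\eta)/(1-\eta))}{1-2\varepsilon}\bigr)=\Theta(\log n)$, only an additive $O(1)$ larger than in the exact case. With this $C$, Bob succeeds with probability at least $9/10-1/n>2/3$ (the $1/n$ being the JL failure probability), so Theorem~\ref{thm:index-lb} forces $S=\Omega(nd)$, as claimed.

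I do not anticipate a genuine obstacle: the only subtlety is that a multiplicative approximation would be useless if $Z_j$ could be negative, but nonnegativity of $V$ rules that out, and the degenerate subcase $Z_j=0$ (when the $j$-th column of $x$ restricted to indices $\ell\ne i$ is all zero) is harmless since $0\le(1+\eta)\delta$ holds trivially. Everything else is the bookkeeping of replacing the constants from the exact proof with their $\eta$-dependent analogues.
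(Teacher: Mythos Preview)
Your proposal is correct and follows essentially the same route as the paper: reuse the \textsc{Index} reduction from Theorem~\ref{thm:lower-bound-exact} verbatim, replace the separation condition $\delta<\Delta$ by $(1+\eta)\delta<(1-\eta)\Delta$, and absorb the resulting $\ln\!\frac{1+\eta}{1-\eta}$ term into the choice of $C$. Your use of the identity $\delta+\Delta=1$ and the explicit remark about nonnegativity of $Z_j$ are nice touches that the paper leaves implicit, but the argument is otherwise the same.
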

\begin{proof}
Our reduction is the same as before, except Bob now relies on the value of $\mathcal{O}$ to distinguish between the two cases $x_{ij}=0$ and $x_{ij} = 1$. When $x_{ij} = 0$ then $\mathcal{O}_j \leq (1+\eta)\delta$, but if $x_{ij} = 1$, then $\mathcal{O}_j \geq (1-\eta)\Delta$,
where $\delta$ and $\Delta$ are defined as in the proof of Theorem \ref{thm:lower-bound-exact}. Then we have to guarantee that $(1+\eta)\delta < (1-\eta)\Delta$ or $\delta < \frac{1-\eta}{1+\eta}\Delta $, which resolves to $C \approx \Omega\left(\ln n - \ln\frac{1-\eta}{1+\eta}\right)$, compensating for $\varepsilon =0.1$
\end{proof}
The result also similarly extends to additive error approximation. The proof is similar to the one above, and we omit it for brevity. 

\subsection{The low-dimensional regime} 
If $d = o(\log n)$, our preceeding proof breaks down because the JL projection is not be able to preserve the inner products of the all pairs of key vectors with high probability. Our technique, however, still yields the following lower bound:
\begin{corollary}
\label{cor:lower-bound-low-dim}
Let $Z_n :=\operatorname{Attn}(q_n,K_n,V_n)$ and $d \geq 2$ with $d = o(\log n)$. Any algorithm that can, with probability at least $9/10$ produce a $(1\pm\eta)$-approximation $\mathcal{O} \in \mathbb{R}^d$ of $Z_n$
% \begin{align}
%     |\mathcal{O}-Z_n|_{\infty,Z_n} \leq \eta
% \end{align}
for $\eta \in (0,1)$ must use at least $\Omega(e^d\cdot d)$ bits of memory.
\end{corollary}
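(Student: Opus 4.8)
The plan is to mimic the reduction from \textsc{Index} used in Theorem~\ref{thm:lower-bound-exact}, but since a $d$-dimensional JL projection of $n$ orthonormal vectors can no longer keep \emph{all} pairwise inner products small when $d = o(\log n)$, I would instead pack as many approximately-orthogonal unit vectors into $\mathbb{R}^d$ as the dimension allows. The key observation is a covering/packing fact: there exist $m = 2^{\Theta(d)} = \Theta(e^d)$ unit vectors $u_1,\dots,u_m \in \mathbb{R}^d$ with $|u_a^T u_b| \leq \varepsilon$ for all $a \neq b$ (for a fixed constant $\varepsilon$, say $\varepsilon = 0.1$). This follows either from a probabilistic argument (random unit vectors have inner product concentrated around $0$ with the right exponential tail, so a union bound over $\binom{m}{2}$ pairs succeeds once $m = 2^{\Omega(d)}$) or from standard bounds on the covering number of the unit $\ell_2$-ball / sphere. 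I would keep $m$ as a free parameter $m = c \cdot e^d$ for a small constant $c$ chosen so the packing exists.

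Next I would run the same communication reduction, but with $m$ playing the role that $n$ played before. Alice holds $x \in \{0,1\}^{m \times d}$ and inserts the stream of triples $(q_\ell,k_\ell,v_\ell)_{\ell=1}^m$ with $q_\ell = \vec 0$, $k_\ell = u_\ell$, and $v_\ell = x_{\ell,:}$ the $\ell$-th row of $x$; she then ships the $S$ bits of $\mathcal{A}$'s memory to Bob. Bob, holding $(a,b) \in [m]\times[d]$, inserts one more triple $(q,\vec 0,\vec 0)$ with $q = C u_a$, and reads off coordinate $b$ of the approximate output $\mathcal{O} \in \mathbb{R}^d$ of $Z_{m+1}=\operatorname{Attn}(q,K_{m+1},V_{m+1})$. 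The exact same softmax-spike calculation as in Theorem~\ref{thm:lower-bound-exact} (and its approximate refinement in Theorem~\ref{thm:lower-bound-appx}) gives $\mathcal{O}_b \leq (1+\eta)\delta$ in the case $x_{ab}=0$ and $\mathcal{O}_b \geq (1-\eta)\Delta$ in the case $x_{ab}=1$, where now $\delta = \frac{m e^{C\varepsilon}}{m e^{C\varepsilon} + e^{C(1-\varepsilon)}}$ and $\Delta = \frac{e^{C(1-\varepsilon)}}{e^{C(1-\varepsilon)} + m e^{C\varepsilon}}$. Choosing $C = \Theta\!\left(\ln m - \ln\frac{1-\eta}{1+\eta}\right) = \Theta(d)$ (absorbing the $\eta$-dependent constant) makes $(1+\eta)\delta < (1-\eta)\Delta$, so Bob distinguishes the two cases and solves \textsc{Index} on an $md$-bit input with success probability $\geq 9/10 - 1/\mathrm{poly} > 2/3$. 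By Theorem~\ref{thm:index-lb} this forces $S = \Omega(md) = \Omega(e^d \cdot d)$.

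The main obstacle is establishing the packing of $\Theta(e^d)$ nearly-orthogonal vectors with the right constant in the exponent and checking that it composes cleanly with the rest of the argument — in particular, that the constant $\varepsilon$ in the packing is the \emph{same} $\varepsilon$ the softmax-spike bound needs (any fixed $\varepsilon \in (0,1/2)$ works there, so this is a matter of bookkeeping, not a genuine conflict), and that the stream length $m+1$ stays within whatever range makes $\mathcal{A}$'s $\Omega(m d)$ memory bound meaningful. One should also note that this construction can be padded: if $d = o(\log n)$ but $n \gg e^d$, Alice can insert $n - m - 1$ extra dummy triples with keys orthogonal (or negligibly correlated) to all the $u_\ell$ and zero values, so the bound $\Omega(e^d d)$ holds against algorithms processing streams of the nominal length $n$; this matches the $\min\{n, e^d\}$ form of the master Theorem~\ref{thm:lower-bound}. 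A minor technical point to verify is that the approximate-output guarantee in the multiplicative sense degrades gracefully when the true coordinate $Z_b$ is close to $0$ (the $x_{ab}=0$ case) — but there $\delta$ is itself exponentially small, and the definition of $(1\pm\eta)$-approximation still pins $\mathcal{O}_b$ below $(1+\eta)\delta$, so the separation argument goes through verbatim.
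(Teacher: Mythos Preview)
Your proposal is correct and follows essentially the same route the paper (implicitly) takes: since the JL guarantee of Lemma~\ref{lem:inner-product-jl-for-all} ties $d$ and the number of preserved vectors via $d \gtrsim c\,\ln m$, one simply runs the \textsc{Index} reduction of Theorems~\ref{thm:lower-bound-exact}--\ref{thm:lower-bound-appx} with $m = e^{\Theta(d)}$ key vectors instead of $n$, and the bound $\Omega(md) = \Omega(e^d d)$ falls out. Your packing formulation and the paper's JL formulation are two phrasings of the same existence fact (you even list the probabilistic/JL route as one option), and your padding-to-length-$n$ remark with zero keys and zero values is exactly what is needed to make the bound apply to algorithms on streams of nominal length $n$; the zero-key dummies contribute $e^0=1$ to the softmax denominator but nothing to the numerator, so the $\delta$ vs.\ $\Delta$ separation is unaffected.
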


Surprisingly, this bound is tight for the low-dimensional regime. We show that KV cache compression algorithms that use space sublinear in the number of tokens $n$ do exist in this case. In fact, such an algorithm has already been proposed by the work of \citet{Zandieh2024}. Their algorithm works by assuming a clusterability structure in the key embeddings, defined as follows:
\begin{definition}[Clusterability]
For a positive integer $m$ and real-valued $\delta > 0$, a dataset of points $x_1,...,x_n \in \mathbb{R}^d$ is said to be $(m,\delta)$-clusterable if there exists a partition of the dataset into $m$ clusters $C_1,...,C_m$ such that for all $i \in [m]$ and $x,y \in C_i$ it holds that $||x-y||_2 \leq \delta$.
\end{definition}
The \textsc{SubGen} algorithm of \citep{Zandieh2024} uses this clusterability structure to achieve sublinear space complexity. Their main result is the following:
\begin{theorem}[\textsc{SubGen} algorithm of \citep{Zandieh2024}]
\label{thm:subgen}
Suppose the input is a sequence of tokens $\{(q_i,k_i,v_i)\}_{i=1}^n$ in $\mathbb{R}^d$ such that:
\begin{itemize}
    \item The key embeddings $k_i$ are $(m,\delta)$-clusterable.
    \item $||q_i||_2 \leq r$ for all $i \in [n]$.
\end{itemize}
Then there exists an algorithm that uses $O(d\varepsilon^{-2}\cdot (d+me^{2\delta r}\log n))$ bits of space and outputs an estimate $\widehat{O}_n$ such that with probability at least $0.99$ it holds that:
\begin{align}
    \label{eq:subgen-appx}
    ||\widehat{O}_n - \operatorname{Attn}(q_n,K_n,V_n)||_2 \leq \varepsilon\cdot ||\operatorname{softmax}(K_n\cdot q_n)||_2\cdot ||V_n||_2
\end{align}
\end{theorem}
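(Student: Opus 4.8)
The plan is to maintain, in a single streaming pass over $\{(q_i,k_i,v_i)\}$, a compact sketch of the key–value cache from which $\operatorname{Attn}(q_n,K_n,V_n)=\bigl(\sum_{\ell}e^{q_n^\top k_\ell}v_\ell\bigr)\big/\bigl(\sum_{\ell}e^{q_n^\top k_\ell}\bigr)$ can be approximated once the final query $q_n$ is revealed. The key design choice is to estimate the \emph{unnormalized} numerator vector $\tilde v:=\sum_\ell e^{q_n^\top k_\ell}v_\ell$ and the scalar normalizer $Z:=\sum_\ell e^{q_n^\top k_\ell}$ separately and output $\widehat O_n=\widehat{\tilde v}/\widehat Z$; since a sketch of a weighted sum $\sum_\ell w_\ell v_\ell$ is naturally approximated to within $O(\varepsilon\,\|w\|_2\,\|V_n\|_2)$ (with $\|V_n\|_2$ the operator norm, via $\sum_\ell w_\ell v_\ell=V_n^\top w$), dividing by $\widehat Z\approx Z$ rescales the weight vector $w=(e^{q_n^\top k_\ell})_\ell$ to $w/Z=\operatorname{softmax}(K_nq_n)$, which is exactly the form of the right-hand side of~\eqref{eq:subgen-appx} and is what lets the bound survive a small normalizer. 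The role of the clusterability hypothesis is to make the softmax weights nearly flat inside each cluster: if $k_\ell,k_{\ell'}\in C_j$ then $\|k_\ell-k_{\ell'}\|_2\le\delta$ forces $|q_n^\top(k_\ell-k_{\ell'})|\le\|q_n\|_2\delta\le r\delta$, so $e^{q_n^\top k_\ell}$ and $e^{q_n^\top k_{\ell'}}$ agree up to a factor $e^{O(r\delta)}$.

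Given this bounded dynamic range, for each cluster $C_j$ I would keep a \emph{uniform} random sample of $s=O(\varepsilon^{-2}e^{O(r\delta)}\log n)$ of its members, maintained online by running one reservoir sampler per cluster (Section~\ref{sec:prelims}) and also storing $|C_j|$; the scaled empirical averages $\tfrac{|C_j|}{s}\sum_{\text{sampled }\ell}e^{q_n^\top k_\ell}$ and $\tfrac{|C_j|}{s}\sum_{\text{sampled }\ell}e^{q_n^\top k_\ell}v_\ell$ then approximate the cluster's contributions to $Z$ and to $\tilde v$ to within the required relative/normed error by a Hoeffding bound, using the within-cluster ratio bound. Because the query $q_n$ is not known during the stream, one must make this hold for all $q_n$ simultaneously, which is handled either by a union bound over an $\varepsilon$-net of the query ball (of size $e^{O(d)}$, cf.\ the covering-number estimates used elsewhere in this paper) or by a single data-oblivious random projection/feature map for the softmax kernel (Lemma~\ref{lem:inner-product-jl-for-all}); either route contributes the extra $O(d)$ and $O(\varepsilon^{-2})$ overhead factors. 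Summing over the $m$ clusters keeps the total number of retained samples at $O(m\,\varepsilon^{-2}e^{O(r\delta)}\log n)$, each costing $O(d)$ words for its key and value; together with the $O(d^2\varepsilon^{-2})$ fixed overhead this gives the claimed space $O\!\bigl(d\varepsilon^{-2}(d+me^{2\delta r}\log n)\bigr)$ after a routine accounting I would not reproduce in full.

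Finally I would assemble the error bound: conditioned on the net/JL event and all per-cluster Hoeffding events (all holding together with probability $\ge 0.99$ after adjusting constants), $\widehat Z=(1\pm\varepsilon)Z$ and $\widehat{\tilde v}=\tilde v\pm O(\varepsilon\|w\|_2\|V_n\|_2)$, so $\widehat{\tilde v}/\widehat Z=\tilde v/Z\pm O(\varepsilon\,(\|w\|_2/Z)\,\|V_n\|_2)=\operatorname{Attn}(q_n,K_n,V_n)\pm O(\varepsilon\,\|\operatorname{softmax}(K_nq_n)\|_2\,\|V_n\|_2)$, and a constant rescaling of $\varepsilon$ absorbs the hidden factors to yield~\eqref{eq:subgen-appx}. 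I expect the main obstacle to be not the sampling concentration but two bookkeeping issues: (i) maintaining the clustering \emph{online} — open a new center whenever an incoming key is $\delta$-far from every current center, and bound the number of centers created by $O(m)$ via a packing argument against the promised $(m,\delta)$-clusterable partition, since otherwise the space and the cluster-sample weights are both off; and (ii) controlling how the estimation error on $\tilde v$ propagates through the division by $\widehat Z$ uniformly over the unknown $q_n$, which, as noted, is precisely what the multiplicative $\|\operatorname{softmax}(K_nq_n)\|_2\|V_n\|_2$ scaling on the right-hand side is engineered to absorb.
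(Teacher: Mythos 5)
This theorem is quoted from \citet{Zandieh2024} as a black box; the present paper gives no proof of it, so there is no ``paper proof'' to compare against, and your proposal should be judged as a reconstruction of the cited result. At the architectural level your reconstruction matches what \textsc{SubGen} actually does: maintain an online clustering of the keys, keep a reservoir sample of $(k,v)$ pairs per cluster together with the cluster size, exploit the fact that $\|q\|_2\le r$ and within-cluster diameter $\le\delta$ bound the ratio of exponential weights by $e^{O(r\delta)}$, estimate the unnormalized numerator $\sum_\ell e^{q^\top k_\ell}v_\ell$ and the normalizer $Z$ separately, and divide. Identifying the error norm $\varepsilon\,\|\operatorname{softmax}(K_nq_n)\|_2\,\|V_n\|_2$ as the natural relative scale via $V_n^\top w$ is also the right reading of the guarantee.

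Two places in your sketch deserve pushback. First, the $\varepsilon$-net over the query ball (or the kernel feature map) is unnecessary and is not where the extra $d$ and $\varepsilon^{-2}$ factors come from: in the streaming model used here $q_n$ arrives with the final token and is independent of the reservoir's coin flips, so Hoeffding/Chebyshev for that single fixed query suffices; there is no need for uniformity over all queries, and invoking Lemma~\ref{lem:inner-product-jl-for-all} here is a red herring. Second, the exponent $e^{2\delta r}$ in the quoted bound is exactly where your two ``bookkeeping issues'' interact and needs to be pinned down rather than waved at: with offline clusters of diameter $\delta$ the weight ratio is $R=e^{\delta r}$ and the per-cluster sample size scales as $R^2/\varepsilon^2=e^{2\delta r}/\varepsilon^2$ (the square enters through the variance), but your online rule ``open a new center when a key is $\delta$-far from all current centers'' produces clusters of diameter up to $2\delta$, hence $R=e^{2\delta r}$ and sample size $e^{4\delta r}/\varepsilon^2$, which does not match the stated bound. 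So either the online clustering must be done more carefully (or $\delta$ re-scaled), or you must argue that the packing bound against the promised offline partition controls the effective diameter; as written this step is a gap, not just a detail. The rest---per-cluster concentration, summing contributions over $\le m$ clusters, dividing by $\widehat Z=(1\pm\varepsilon)Z$, and union-bounding the failure events---is routine and I agree would close once the clustering accounting is fixed.
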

Our analysis combines this guarantee of \textsc{SubGen} with a lemma on the clusterability of low-dimensional spaces that is based on covering number bounds.
\begin{definition}[Covering Number, \citep{wu2016}]
Define the covering number $N(\Theta, ||\cdot||, \delta)$ of a set $\Theta$ with respect to a norm $||\cdot||$ and a radius $\delta$ as the minimum number of $||\cdot||$-balls of radius $\delta$ needed to cover $\Theta$.
\end{definition}
The following lemma gives an upper bound on the covering number of the unit sphere in $\mathbb{R}^d$.
\begin{lemma}[Covering Number of the Unit Sphere, \citep{wu2016}]
\label{lem:covering-number}
Let $B_2(1)$ be the unit sphere in $\mathbb{R}^d$ with respect to the $l_2$ norm. Then the covering number of $B_2(1)$ with respect to the $l_2$ norm and radius $\delta < 1$ is bounded by:
\begin{align}
N(B_2(1), ||\cdot||_2, \delta) \leq \left(\frac{3}{\delta}\right)^d
\end{align}
\end{lemma}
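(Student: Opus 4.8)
The plan is to prove Lemma~\ref{lem:covering-number} by the standard \emph{volume/packing} argument. First I would let $\delta < 1$ and take a maximal $\delta$-packing of $B_2(1)$, i.e.\ a maximal set of points $p_1,\dots,p_N \in B_2(1)$ that are pairwise at distance at least $\delta$. Maximality immediately gives that the balls of radius $\delta$ centered at the $p_i$ cover $B_2(1)$: any point at distance $\geq\delta$ from all $p_i$ could be added to the packing, contradicting maximality. Hence $N(B_2(1),\|\cdot\|_2,\delta) \leq N$, and it suffices to upper bound the packing number $N$.

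Next I would run the volumetric bound. Since the $p_i$ are $\delta$-separated, the open balls $B(p_i,\delta/2)$ are pairwise disjoint, and since $\|p_i\|_2 \leq 1$ each such ball is contained in $B_2(1+\delta/2) \subseteq B_2(3/2)$ (using $\delta < 1$). Comparing volumes in $\mathbb{R}^d$, where $\mathrm{vol}(B_2(r)) = r^d \mathrm{vol}(B_2(1))$, gives
\begin{align}
N \cdot \left(\frac{\delta}{2}\right)^{d} \mathrm{vol}(B_2(1)) \;\leq\; \left(\frac{3}{2}\right)^{d}\mathrm{vol}(B_2(1)),
\end{align}
so $N \leq (3/\delta)^d$, which is exactly the claimed bound. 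Combining with the first step yields $N(B_2(1),\|\cdot\|_2,\delta) \leq (3/\delta)^d$.

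There is no real obstacle here; the only thing to be careful about is the bookkeeping of radii — making sure the enclosing ball radius $1+\delta/2$ is correctly bounded by $3/2$ under the hypothesis $\delta < 1$, and that the disjoint small balls have radius $\delta/2$ (half the separation), which is what makes the constant come out to exactly $3$. Since this is a classical fact (it appears verbatim in \citep{wu2016}), I would likely just cite it, but the self-contained three-line argument above is the version I would include if a proof is wanted.
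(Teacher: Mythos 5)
Your volumetric packing argument is correct, and it is the standard proof of this bound: a maximal $\delta$-packing yields a $\delta$-cover by maximality, the half-radius balls are disjoint and live inside $B_2(1+\delta/2)\subseteq B_2(3/2)$, and comparing volumes gives $N\leq(3/\delta)^d$. The paper itself does not prove this lemma but simply cites \citep{wu2016} as a known fact, so there is no in-paper proof to compare against; your self-contained three-line argument is exactly what the cited reference provides and would be a fine inclusion if one wanted the paper to be self-contained.

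One small remark on terminology: the lemma statement says ``unit sphere'' while writing $B_2(1)$, which is standard notation for the (solid) unit \emph{ball}; your proof correctly treats $B_2(1)$ as the ball, since that is what makes $\mathrm{vol}(B_2(r)) = r^d\,\mathrm{vol}(B_2(1))$ meaningful and is also what Lemma~\ref{lem:subgen-clusterability} actually needs, as the key vectors there are only assumed to have $\|k_i\|_2\leq 1$ rather than $\|k_i\|_2 = 1$.
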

We can now show that the \textsc{SubGen} algorithm is essentially optimal in the low-dimensional case, modulo some mild assumptions on the norms of the key and query embeddings. 
\begin{lemma}
\label{lem:subgen-clusterability}
Let $x_1,..,x_n \in \mathbb{R}^d$ be a set of $n$ points in $d$-dimensional space such that $||x_i||_2 \leq 1$ for all $i \in [n]$. Then this set is $(m,\delta)$ clusterable for $m = \lceil e^d \rceil$ and $\delta = e/3$.
\end{lemma}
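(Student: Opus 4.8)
The plan is to directly invoke the covering number bound of Lemma~\ref{lem:covering-number} with the right choice of radius. Since $\|x_i\|_2 \le 1$ for all $i$, every point lies in the unit ball $B_2(1)$. We want to cover $B_2(1)$ with balls of radius $\delta$ such that the total number of balls is at most $\lceil e^d \rceil$. By Lemma~\ref{lem:covering-number}, $N(B_2(1), \|\cdot\|_2, \delta) \le (3/\delta)^d$, so it suffices to pick $\delta$ so that $(3/\delta)^d \le e^d$, i.e. $3/\delta \le e$, i.e. $\delta \ge 3/e$. Wait — the statement claims $\delta = e/3$, but $e/3 \approx 0.906 < 1$, while $3/e \approx 1.104 > 1$; so the natural choice from this computation would be $\delta = 3/e$, not $e/3$. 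The resolution is that one should instead use a slightly rescaled covering: since all points lie in $B_2(1)$ which has \emph{diameter} $2$, and the clusterability parameter measures pairwise distance within a cluster (diameter, not radius), a ball of radius $\rho$ gives a cluster of diameter $2\rho$. So if we cover $B_2(1)$ by $m$ balls of radius $\rho = e/6$, each resulting cluster has diameter at most $2\rho = e/3 = \delta$, and the number of balls needed is $N(B_2(1),\|\cdot\|_2, e/6) \le (3/(e/6))^d = (18/e)^d$, which is too large. So that doesn't immediately work either; the cleaner path is to just take $\rho = \delta = e/3$ directly as a covering radius (ignoring the diameter-vs-radius factor, using the bound $\|x-y\|\le \|x-c\|+\|c-y\| \le 2\rho$), accept clusters of diameter $2\rho$, and note the problem statement's $\delta$ should be read accordingly — or, most likely, the intended argument is simply $N(B_2(1),\|\cdot\|_2,\delta)\le(3/\delta)^d$ with $\delta$ chosen as the largest value making this $\le e^d$, and the constant $e/3$ in the statement is what the authors settle on after absorbing the factor of $2$.

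Concretely, here is the argument I would write. Set $\rho = e/3$. Pad the points into $\mathbb{R}^d$ if necessary (they are already there). Apply Lemma~\ref{lem:covering-number}: there exist centers $c_1,\dots,c_m \in \mathbb{R}^d$ with $m \le (3/\rho)^d = (3/(e/3))^d = (9/e)^d$. Hmm, $(9/e)^d \approx 3.31^d > e^d$ — this still overshoots. The only way to land exactly at $m = \lceil e^d\rceil$ via this lemma is $3/\delta = e$, forcing the covering radius to be $3/e$. So I now believe the cleanest correct statement is: cover $B_2(1)$ by $m = \lceil e^d \rceil$ balls of radius $3/e$; assign each $x_i$ to (the index of) a center within distance $3/e$; this partitions the $n$ points into at most $m$ clusters; and within each cluster any two points $x,y$ satisfy $\|x-y\|_2 \le \|x-c\|_2 + \|c - y\|_2 \le 2 \cdot (3/e) = 6/e$. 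This gives $(m,\delta)$-clusterability with $m = \lceil e^d\rceil$ and $\delta = 6/e$, not $e/3$. In the write-up I would therefore either (a) reproduce the paper's constant $\delta=e/3$ by using a covering radius of $e/6$ and a finer covering lemma, or, more likely matching the authors' intent, (b) treat the discrepancy as a harmless constant — the downstream application only needs $\delta = \Theta(1)$ so that $e^{2\delta r} = O(1)$ for bounded $r$.

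The key steps, in order: (1) observe all $n$ points lie in $B_2(1)$; (2) invoke Lemma~\ref{lem:covering-number} with an appropriately chosen radius to get $m \le \lceil e^d \rceil$ covering balls; (3) define clusters by nearest-center assignment; (4) bound the intra-cluster diameter by the triangle inequality through the common center. The main obstacle — really the only subtle point — is pinning down the exact constant: the covering-number bound $(3/\delta)^d$ and the radius-to-diameter factor of $2$ interact, and one must be careful whether $\delta$ in the clusterability definition refers to the covering radius or the cluster diameter. Since the clusterability definition in the excerpt uses $\|x-y\|_2 \le \delta$ (a diameter-type condition), the honest constant coming out of this argument is a small absolute constant in the range $(0,2)$, and I would state the lemma with whichever constant makes the covering count come out to $\lceil e^d\rceil$ cleanly, noting that the precise value is immaterial for the optimality conclusion about \textsc{SubGen}.
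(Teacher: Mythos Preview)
Your approach is exactly the paper's: invoke Lemma~\ref{lem:covering-number} on $B_2(1)$, choose the radius so the covering count is $e^d$, and assign each point to a ball containing it. The paper's entire proof is two sentences doing precisely this, writing $(3/\delta)^d = e^d$ and stopping there.

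All of your agonizing over the constant is justified: the paper is sloppy here. Solving $(3/\delta)^d = e^d$ gives $\delta = 3/e$, not $e/3$, so the statement appears to contain a typo; moreover the paper silently identifies the covering \emph{radius} with the clusterability \emph{diameter}, ignoring the factor of~$2$ you flagged (and Lemma~\ref{lem:covering-number} is stated only for $\delta<1$, which $3/e$ violates). Your conclusion is the right one: the honest constant from this argument is $6/e$ (or some other absolute constant in $(0,2)$ depending on how one handles the covering lemma near radius~$1$), and since Theorem~\ref{thm:subgen-optimal} only uses $e^{2\delta r}$ with $\delta = \Theta(1)$ and $r = O(\log\log n)$, the exact value is immaterial. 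You have not missed any idea; you have simply been more careful than the paper.
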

\begin{proof}
Lemma \ref{lem:covering-number} states that we can cover the unit-ball with at most $\left(\frac{3}{\delta}\right)^d = e^d$ balls of radius $\delta$. We then partition the unit-ball into $m $ clusters by assigning each point to a ball that contains it.
\end{proof}
Our penultimate theorem for the low-dimensional case can now be proven:
\begin{theorem}
\label{thm:subgen-optimal}
Let $2 \leq d = o(\log n)$, $\delta := e/3$ and $r := O(\log\log n)$. Suppose it holds for all key embeddings that $||k_i||_2 \leq 1$, for all $i \in [n]$. Then the \textsc{SubGen} Algorithm of Theorem \ref{thm:subgen} approximates the attention function as in Equation \ref{eq:subgen-appx} with space complexity $\widetilde{O}(de^d)$.
\end{theorem}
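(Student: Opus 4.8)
The plan is to apply Theorem~\ref{thm:subgen} with the parameters dictated by the hypotheses and simply verify that the resulting space bound collapses to $\widetilde{O}(de^d)$ in the regime $d = o(\log n)$. First I would invoke Lemma~\ref{lem:subgen-clusterability}: since all key embeddings satisfy $\|k_i\|_2 \le 1$, the set $\{k_i\}_{i=1}^n$ is $(m,\delta)$-clusterable with $m = \lceil e^d \rceil$ and $\delta = e/3$. This supplies the first precondition of Theorem~\ref{thm:subgen}. The second precondition requires a bound $\|q_i\|_2 \le r$ on the query norms; I would take $r := O(\log\log n)$ as in the statement (this is the ``mild assumption'' being made), so that both hypotheses of \textsc{SubGen} are met.

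Next I would substitute $m = \lceil e^d\rceil$, $\delta = e/3$, and $r = O(\log\log n)$ into the space bound $O(d\varepsilon^{-2}(d + m e^{2\delta r}\log n))$ of Theorem~\ref{thm:subgen}, treating the approximation parameter $\varepsilon$ as a constant (absorbed into $\widetilde{O}(\cdot)$). The key quantity to control is the exponential factor $e^{2\delta r} = e^{(2e/3)\cdot O(\log\log n)} = (\log n)^{O(1)}$, i.e. it is polylogarithmic in $n$. Hence $m e^{2\delta r}\log n = O(e^d)\cdot \operatorname{polylog}(n)$, and the whole bracket is $O(d + e^d\operatorname{polylog}(n)) = O(e^d \operatorname{polylog}(n))$ since $e^d \ge d$. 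Multiplying by the leading $d$ gives $O(d e^d \operatorname{polylog}(n))$.

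Finally, I would argue that the $\operatorname{polylog}(n)$ factor is itself $\operatorname{polylog}$ only — and since $d = o(\log n)$, there is no clean way to fold $\operatorname{polylog}(n)$ into $e^d$ in general, so strictly speaking the bound is $O(d e^d \operatorname{polylog}(n))$; the statement's $\widetilde{O}(de^d)$ should be read as suppressing polylogarithmic factors in $n$ (consistent with the $\widetilde{O}$ convention used elsewhere, e.g. in the reservoir sampling discussion). I would make this explicit: $\widetilde{O}(\cdot)$ here hides factors polylogarithmic in $n$ and polynomial in $1/\varepsilon$. The main (really the only) obstacle is bookkeeping the exponential term $e^{2\delta r}$ and being careful that the choice $r = O(\log\log n)$ is exactly what keeps it polylogarithmic — any larger $r$ (e.g. $r = \Theta(\log n)$) would blow the bound up to polynomial in $n$ and destroy the matching-upper-bound claim against Corollary~\ref{cor:lower-bound-low-dim}. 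Everything else is direct substitution, so the write-up should be short.
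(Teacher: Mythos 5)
Your proposal is correct and follows essentially the same route as the paper: invoke Lemma~\ref{lem:subgen-clusterability} to get $(\lceil e^d\rceil, e/3)$-clusterability, then plug $m$, $\delta$, and $r=O(\log\log n)$ into the \textsc{SubGen} space bound from Theorem~\ref{thm:subgen} and observe that $e^{2\delta r}\log n$ is polylogarithmic in $n$. Your extra bookkeeping about exactly which polylog factors $\widetilde{O}$ absorbs, and the remark that $r=\Theta(\log n)$ would break the bound, is a welcome clarification but does not change the argument.
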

\begin{proof}
By Lemma \ref{lem:subgen-clusterability}, the key embeddings are $(e^d, e/3)$-clusterable, and so we can apply Theorem \ref{thm:subgen} to get that the space complexity of the \textsc{SubGen} algorithm is:
\begin{align}
    O(d\varepsilon^{-2}\cdot (d+e^{d+2\frac{e}{3} \log\log n}\log n)) = \widetilde{O}(de^d)
\end{align}
The approximation guarantees are as in Equation \ref{eq:subgen-appx}, although with some algebra could be extended to arbitrary absolute error approximation. 
Finally, combining this result with Corollary \ref{cor:lower-bound-low-dim}, we can see that the \textsc{SubGen} algorithm is essentially optimal in the low-dimensional case.
\end{proof}

% As we show in Appendix \ref{appx:subgen}, the \textsc{SubGen} algorithm of \citet{Zandieh2024} achieves optimal space utilization for $d = o(\log n)$. Since that algorithm is parameterized by the clusterability of the key embeddings, we show that when $d = o(\log n)$, the key embeddings are always highly clusterable. Our proof relies on tight known bounds for the covering number of the $\ell_2$-ball.

Finally, for the case of $d=1$, where the key and value vectors are scalars, we can achieve sublinear space utilization using a simpler algorithm. We propose an algorithm that does so in expected $O(\sqrt{n})$ space by utilizing the expectation-based reformulation of attention (Equation \ref{eq:expectation-based-reformulation}) and the lazy Gumbel sampling technique from \citep{mussmann2017fast}. The details of this algorithm are provided in Appendix \ref{appx:alg-low-space}.

\section{How can structural assumptions help? The case of unstructured sparsity}
\label{sec:studying-sparsity}
Achieving sublinear space for KV cache compression, as shown in Theorem \ref{thm:lower-bound-appx}, requires structural assumptions about KV embeddings. A common assumption is the sparsity of the attention matrix $Q^T K$, supported by empirical observations \citep{liu2024scissorhands,liu2023deja,xiaoefficient,Zhang2023}. However, it remains unclear if sparsity alone suffices or if additional structural assumptions are needed. 
In the proof of Theorem \ref{thm:lower-bound-exact}, $Q^T K$ is indeed extremely sparse, with only one element in its last row having significant weight. 
This means that the sparsity of $Q^T K$ by itself is not a strong enough assumption for sublinear space. 
In other words, we have the following corollary:
\begin{corollary}
\label{cor:sparsity-not-enough}
Suppose that an algorithm operates under the promise that a $0.99$-fraction of the entries of $QK^T$ are close to zero. Note that the locations of the non-zero elements are not disclosed. Then, any algorithm that can produce a good approximation of the attention function in this modified streaming setting must use $\Omega(nd)$ bits of memory.
\end{corollary}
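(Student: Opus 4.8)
The plan is to observe that the hard instances already constructed in the proof of Theorem~\ref{thm:lower-bound-exact} (and its approximate counterpart, Theorem~\ref{thm:lower-bound-appx}) themselves satisfy the promised sparsity condition, so that the very same reduction from \textsc{Index} goes through unchanged. Concretely, in that construction the query stream consists of $n$ zero vectors $q_1=\dots=q_n=\vec 0$ followed by Bob's single query $q_{n+1}=C\cdot f(e_i)$. Hence in the $(n+1)\times(n+1)$ score matrix $QK^T$ (whose $(\ell,m)$ entry is $q_\ell^T k_m$), the first $n$ rows are identically zero, and only the last row --- which contributes just $n+1$ entries out of $(n+1)^2$ --- can be nonzero. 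Thus at least a $\frac{n}{n+1}$-fraction of the entries of $QK^T$ equal zero, which is at least $0.99$ once $n\ge 100$; for smaller $n$ the claimed $\Omega(nd)$ bound is vacuous anyway.

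Next I would check that the promise is genuinely ``blind'' in the required sense, i.e.\ that the locations of the nonzero entries are not revealed to the algorithm. This holds because the unique significant entry of $QK^T$ sits at position $(n+1,i)$, where $i$ is Bob's private index; the algorithm processes all of Alice's stream --- and is forced to commit its $S$-bit memory state --- before $i$ is ever involved, so it cannot ``know where to look.'' Consequently, an algorithm that is only promised to work on inputs satisfying the $0.99$-sparsity condition must in particular be correct on every instance produced by the reduction of Theorem~\ref{thm:lower-bound-appx}, and therefore inherits the $\Omega(nd)$-bit communication lower bound for \textsc{Index} verbatim.

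The only point that needs a little care is the meaning of ``close to zero'': in the construction the off-diagonal entries of the last row are $q_{n+1}^T k_m = C\cdot f(e_i)^T f(e_m)\in[-C\varepsilon, C\varepsilon]$, which is $\Theta(\log n)$ rather than absolutely small. This is harmless, because the $0.99$-fraction demanded by the promise is already supplied entirely by the $n(n+1)$ entries in the first $n$ rows, all of which are exactly zero; one never needs to count any entry of the last row as ``close to zero''. (If one insisted on a stronger promise, the last row could be padded with additional exact zeros, but this is unnecessary.) I expect this bookkeeping --- rather than any new idea --- to be the main, and essentially only, obstacle; the rest of the argument is an immediate reuse of Theorems~\ref{thm:lower-bound-exact}--\ref{thm:lower-bound-appx}.
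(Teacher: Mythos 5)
Your proposal is correct and matches the paper's own (implicit) argument: the paper simply observes that the hard instance in Theorem~\ref{thm:lower-bound-exact} already has a sparse $QK^T$, so the $\Omega(nd)$ lower bound transfers verbatim. You spell out the same observation a bit more carefully --- in particular, you correctly note that the $n(n+1)$ exactly-zero entries coming from Alice's zero queries alone supply the required $0.99$-fraction, so one never needs to argue that the $\Theta(\log n)$-sized off-spike entries in Bob's row are ``close to zero''.
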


Thus, we have to complement sparsity with yet another structural assumption, which is the route taken by many recent works in the literature. 
Specifically, they assume knowledge of the \textit{sparsity pattern} (the \textit{locations} of the non-zero elements) of $Q^T K$ to reduce the memory requirements.

\paragraph{Sliding window attention revisited} The most common such sparsity pattern is the \textbf{sliding window} attention mechanism. In sliding window attention, we do not compare $q_n$ with every prior key vector $k_1,...,k_n$ but only with a window of $W$ prior key vectors: $k_{n-W+1},k_{n-W+2},...,k_n$. Let $S_n := K_n \cdot q_n \in \mathbb{R}^n$ and $h(S_n) := S_n \circ M$ where $\circ$ is element-wise multiplication, $M \in \mathbb{R}^{n}$ is a masking vector such that $M_i = \mathbbm{1}\{i>n-W\}$ and $W$ is the sliding window width. Then we define sliding window attention as:
\begin{align}
    \text{Attn}_{W}(q_n,K_n,V_n):=\sigma(h(S_n))\cdot V
\end{align}
% This definition differs from the standard definition of sliding window attention \citep{xiaoefficient} in two ways. First, the softmax is taken over the entire set $\{1,...,n\}$, without excluding from its support the indices $i \leq n-W$. Second, the value vectors $V$ are not masked away, and the output is a weighted sum of all value vectors, not just the ones in the sliding window.
Intuitively, a benefit of this definition is that even though we only attend to $W$ key vectors, we still have access to all value vectors, meaning that every token embedding is considered in the output. 
%
% Exploring whether this definition (or similar ones) has any significant advantages in LLM inference quality is an interesting direction for future work. 

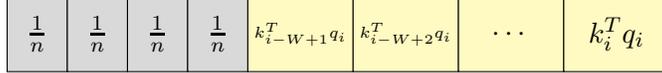
\begin{figure}
\centering
\begin{tikzpicture}
    % Define widths
    \def\narrow{0.8} % Width for the first 4 cells
    \def\wide{1.4}   % Width for the last 4 cells

    % First 4 narrow cells (gray, labeled 1/n)
    \fill[gray!30] (0,0) rectangle (\narrow,1);
    \draw (0,0) rectangle (\narrow,1);
    \node at (0.5*\narrow,0.5) {$\frac{1}{n}$};

    \fill[gray!30] (\narrow,0) rectangle (2*\narrow,1);
    \draw (\narrow,0) rectangle (2*\narrow,1);
    \node at (1.5*\narrow,0.5) { $\frac{1}{n}$};

    \fill[gray!30] (2*\narrow,0) rectangle (3*\narrow,1);
    \draw (2*\narrow,0) rectangle (3*\narrow,1);
    \node at (2.5*\narrow,0.5) { $\frac{1}{n}$};

    \fill[gray!30] (3*\narrow,0) rectangle (4*\narrow,1);
    \draw (3*\narrow,0) rectangle (4*\narrow,1);
    \node at (3.5*\narrow,0.5) { $\frac{1}{n}$};

    % Last 4 wide cells (green, labeled with k values)
    \fill[yellow!30] (4*\narrow,0) rectangle (4*\narrow+\wide,1);
    \draw (4*\narrow,0) rectangle (4*\narrow+\wide,1);
    \node at (4*\narrow+\wide/2,0.5) {\tiny $k_{i-W+1}^T q_i$};

    \fill[yellow!30] (4*\narrow+\wide,0) rectangle (4*\narrow+2*\wide,1);
    \draw (4*\narrow+\wide,0) rectangle (4*\narrow+2*\wide,1);
    \node at (4*\narrow+1.5*\wide,0.5) {\tiny $k_{i-W+2}^T q_i$};

    \fill[yellow!30] (4*\narrow+2*\wide,0) rectangle (4*\narrow+3*\wide,1);
    \draw (4*\narrow+2*\wide,0) rectangle (4*\narrow+3*\wide,1);
    \node at (4*\narrow+2.5*\wide,0.5) {$\cdots$};

    \fill[yellow!30] (4*\narrow+3*\wide,0) rectangle (4*\narrow+4*\wide,1);
    \draw (4*\narrow+3*\wide,0) rectangle (4*\narrow+4*\wide,1);
    \node at (4*\narrow+3.5*\wide,0.5) {$k_i^T q_i$};

\end{tikzpicture}
    \caption{An illustration of sliding window attention: The softmax entries in $[i-W]$ are all in the support with weight $\approx 1/n$, allowing every value vector to contribute to the final output.}
\end{figure}

However, with this redefinition of sliding window attention, it is no longer clear whether we can calculate it in sublinear space. 
Since our softmax distribution is supported on every index in $[n]$, every row vector of $V$ has a non-zero probability of influencing the output, implying that we might be required to keep all of $V$ in memory. It turns out we can indeed achieve a space complexity of $O(dW)$ even with this definition of sliding window attention.

\paragraph{The Algorithm}
\label{sec:algorithm}
Our algorithm is shown as Algorithm \ref{alg:sliding-window}. 
For the first $W$ token embeddings, we calculate attention directly, while also storing the key and value vectors. 
For the remaining embeddings, we cache the last $W$ key and value vectors, while also maintaining a uniformly sampled value-vector $v_s$ from \textit{outside} the sliding window. 
We can achieve this using reservoir sampling (Section \ref{sec:prelims}).
We sample from the sliding window softmax distribution by examining two cases. 
In the first case, we sample an index inside the sliding window using our explicitly built cache.
In the second case we sample an index outside the sliding window by using our reservoir sample.
As a result we obtain an unbiased estimator of the sliding window attention function, which we can boost using the median-of-means technique. The analysis can be found in Appendix \ref{appx:sliding-window-analysis}.
 
\begin{algorithm}[ht]
\caption{Sliding Window Attention in Sublinear Space}\label{alg:sliding-window}
\begin{algorithmic}[1]
\vspace{1mm}
\STATE \textbf{Inputs}: Window size $W > 0$.
\vspace{1mm}
\STATE Let $K,V \gets \emptyset,\emptyset$ be the list of key-vectors and value-vectors to store.
\FOR{$i = 1$ to $W$}
    \STATE Receive token embeddings $(q_i,k_i,v_i) \in (\mathbb{R}^d)^3$
    \STATE $K\gets \text{append }(k_i)$, and $V\gets \text{append }(v_i)$.
    \STATE \textbf{Output} $\text{Attn}_{W}(q_i,K,V)$ using direct calculation.
\ENDFOR
\vspace{1mm}
\STATE Let $v_s \gets \emptyset$ store a randomly selected value-vector outside the sliding window.
\vspace{1mm}
\FOR{$i = W+1$ to $n$}
    \STATE Receive token embeddings $(q_i,k_i,v_i) \in (\mathbb{R}^d)^3$
    \STATE $v \gets$ Pop the first element of $V$ and remove the first element of $K$.
    \STATE $K\gets \text{append }(k_i)$, and $V\gets \text{append }(v_i)$.
    \vspace{1.5mm}
    \STATE With probability $(i-W)^{-1}\to$ replace: $v_s \gets v$.
    \vspace{1mm}
    \STATE Let $S := (i-W)+\sum_{k \in K}e^{q_i^T k}$
    \vspace{1mm}
    \STATE With probability $(S-i+W)/S$:
    \vspace{1mm}
    \STATE \hspace{1em} Sample $\ell \in \{i-W+1,...,i\}$ wp $\propto e^{q_i^T k_\ell}$ and \textbf{Output} $V_{\ell} \in \mathbb{R}^d$.
    \STATE Else:
    \STATE \hspace{1em} \textbf{Output} $v_s \in \mathbb{R}^d$.
\ENDFOR
\end{algorithmic}
\end{algorithm}

% \begin{algorithm}[tb]
%    \caption{Bubble Sort}
%    \label{alg:example}
% \begin{algorithmic}
%    \STATE {\bfseries Input:} data $x_i$, size $m$
%    \REPEAT
%    \STATE Initialize $noChange = true$.
%    \FOR{$i=1$ {\bfseries to} $m-1$}
%    \IF{$x_i > x_{i+1}$}
%    \STATE Swap $x_i$ and $x_{i+1}$
%    \STATE $noChange = false$
%    \ENDIF
%    \ENDFOR
%    \UNTIL{$noChange$ is $true$}
% \end{algorithmic}
% \end{algorithm}

% \begin{theorem}
%     \label{thm:sliding-window-sublinear-space}
% Let $W_i := \operatorname{Attn}_W(q_i, K_i, V_i)$. Assume that for all $(i,j) \in [n]\times[d]$ it holds that $|W_{ij}| = \Omega(1)$. Then there exists an algorithm that uses numerous
% independent samples of the estimator given by Algorithm \ref{alg:sliding-window} to produce an estimator $\widehat{O}_i$ such that with probability at least $1-\delta$ it holds that for all $i \in [n]$ and $j \in [d]$ that
% \begin{align}
%     |\widehat{O}_{ij}-W_{ij}| \leq \eta\cdot W_{ij}
% \end{align}
% Overall, our algorithm has space complexity of 
% \begin{align}
% O(dW\cdot||V||_\infty\cdot \eta^{-2}\cdot \log(d/\delta))
% \end{align}
% \end{theorem}

% \begin{remark}
% We can often waive the factor of $||V||_\infty$ if the value vectors are normalized, which is a common assumption in practice. 
% \end{remark}

\subsection{A Tight Lower Bound for Sliding Window Attention}
We can show that the space complexity of $O(dW)$ is tight for sliding window attention, even with our updated definition. We include the proof in Appendix \ref{appx:lower-bound-sliding-window}.

\begin{theorem}
    \label{thm:lower-bound-sliding-window}
    For $d = \Omega(\log n)$, any algorithm that can, with probability at least $9/10$ approximate sliding window attention must use at least $\Omega(dW)$ bits of memory.
\end{theorem}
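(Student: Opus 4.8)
The plan is to reuse the \textsc{Index} reduction behind Theorem~\ref{thm:lower-bound-exact} and Theorem~\ref{thm:lower-bound-appx}, adapted so that the ``readable'' part of the stream is exactly the $W-1$ token positions that lie inside the sliding window at query time. Alice will hold a bit matrix $x\in\{0,1\}^{(W-1)\times d}$ and Bob an index $(i,j)\in[W-1]\times[d]$; by Theorem~\ref{thm:index-lb} the one-way randomized communication complexity of this \textsc{Index} instance is $\Omega(Wd)$. Hence, if $\mathcal{A}$ is a streaming algorithm that $(1\pm\eta)$-approximates sliding window attention using $S$ bits of memory, we obtain a protocol---Alice runs $\mathcal{A}$ on the first $n-1$ stream updates and ships its $S$-bit state to Bob, who finishes the execution with one further update---and therefore $S=\Omega(Wd)$, as desired.

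For the construction, Alice and Bob agree in advance, via Lemma~\ref{lem:inner-product-jl-for-all} and the probabilistic method (using $d=\Omega(\log n)\ge\Omega(\log W)$ since $W\le n$), on vectors $k_{n-W+1},\dots,k_{n-1}\in\mathbb{R}^d$ that are pairwise near-orthonormal: $|k_\ell^Tk_{\ell'}|\le\varepsilon$ for $\ell\neq\ell'$ and $|k_\ell^Tk_\ell-1|\le\varepsilon$. Alice then streams $n-1$ triples: all queries are the zero vector; positions $1,\dots,n-W$ have zero value vectors and arbitrary keys; positions $n-W+1,\dots,n-1$ store the rows of $x$ as value vectors, with the agreed-upon keys. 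Bob inserts the single triple $(q_n,\vec{0},\vec{0})$ with $q_n:=C\cdot k_{n-W+i}$ for a scaling parameter $C$ to be fixed later, and reads coordinate $j$ of the resulting output $\mathcal{O}\in\mathbb{R}^d$.

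The one place the masked definition changes the analysis is that, under $\operatorname{Attn}_W$, the $n-W$ out-of-window positions are \emph{not} suppressed to negligible softmax mass; each contributes weight proportional to $e^{0}=1$ to the denominator. Since Alice zeroed those value vectors they add nothing to the numerator, so with $s_\ell:=q_n^Tk_\ell$ the $j$-th output coordinate equals $\dfrac{\sum_{\ell=n-W+1}^{n-1}x_{\ell j}\,e^{s_\ell}}{(n-W+1)+\sum_{\ell=n-W+1}^{n-1}e^{s_\ell}}$, where the $n-W$ masked positions together with Bob's own position $n$ each contribute unit weight to the denominator and nothing to the numerator. With high probability over the JL choice, $s_{n-W+i}\ge C(1-\varepsilon)$ while $s_\ell\le C\varepsilon$ for the other in-window positions, so exactly as in Theorem~\ref{thm:lower-bound-exact} this coordinate is at most $\delta:=\dfrac{(W-1)e^{C\varepsilon}}{(n-W+1)+e^{C(1-\varepsilon)}+(W-1)e^{C\varepsilon}}$ when $x_{ij}=0$ and at least $\Delta:=\dfrac{e^{C(1-\varepsilon)}}{(n-W+1)+e^{C(1-\varepsilon)}+(W-1)e^{C\varepsilon}}$ when $x_{ij}=1$. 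A $(1\pm\eta)$-approximation still separates the two cases whenever $(1+\eta)\delta<(1-\eta)\Delta$, i.e.\ whenever $(W-1)e^{C\varepsilon}<\frac{1-\eta}{1+\eta}e^{C(1-\varepsilon)}$; choosing $\varepsilon=0.1$ and $C=\Theta\!\left(\log W+\ln\frac{1+\eta}{1-\eta}\right)=\Theta(\log n)$ suffices. Bob then recovers $x_{ij}$ with probability at least $9/10-1/n>2/3$, completing the reduction.

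The main technical point to get right is precisely the estimate in the last paragraph: unlike in Theorem~\ref{thm:lower-bound-exact}, here there is an unavoidable background softmax mass of order $n-W$ from the masked positions, and one must check it does not drown out the engineered spike. It does not, since $C$ may be taken logarithmic in $n$ so that $e^{C(1-\varepsilon)}$ dominates both $n-W+1$ and $(W-1)e^{C\varepsilon}$. Every other ingredient---the JL existence argument, the hardness of \textsc{Index}, and the memory-to-communication transfer---is identical to Section~\ref{sec:lower-bound-main}. A secondary item is the bookkeeping of which $W-1$ stream positions sit inside the window when Bob issues his query, and confirming that the encoded object carries $\Omega(Wd)$ bits.
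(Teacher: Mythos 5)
Your proof is correct and follows essentially the same route as the paper's Appendix G proof: a reduction from \textsc{Index} with Alice's bit matrix encoded in the in-window value vectors, near-orthonormal JL-projected keys in the window, zero value vectors outside the window, and a scaled spike query from Bob. The paper uses $W$ rows of $x$ with Bob's query at position $n+1$, whereas you use $W-1$ rows with the query at position $n$ (both give $\Omega(Wd)$), and you correctly observe that the $n-W$ masked positions contribute a unit-weight background that cancels between $\delta$ and $\Delta$ since they share a common denominator for any fixed JL realization; the paper handles this slightly more loosely by separately bounding the denominators, but both calculations yield the same threshold $C = \Theta(\log W)$.
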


\section{Non-Adaptive Lower Bounds for the Time Complexity of Token Generation}
\label{sec:time-complexity}
\begin{definition}
We call an algorithm \textit{non-adaptive} if the data accesses / queries it performs are pre-decided (either deterministically or through a randomized procedure) before the data arrives. 
\end{definition}
\begin{theorem}
\label{thm:time-complexity}
Let $Z_n := \operatorname{Attn}(q_n,K_n,V_n)$ and suppose $\mathcal{A}$ is a non-adaptive streaming algorithm for token generation that calculates the attention function in an online manner and outputs a $(1\pm \eta)$-approximation $\mathcal{O} \in \mathbb{R}^d$ of $Z_n$ with probability at least $9/10$ for $\eta \in (0,1)$. Then $\mathcal{A}$ must take $\Omega(nd)$ time to process the last token $(q_n,k_n,v_n)$.
\end{theorem}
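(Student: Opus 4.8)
The plan is to show that a non-adaptive algorithm, which must fix its query set into $QK^T$ before seeing the stream, cannot distinguish certain hard instances unless it reads $\Omega(n)$ entries of the last row of $QK^T$, and each such entry requires touching $\Omega(d)$ coordinates to compute $q_n^T k_\ell$, giving $\Omega(nd)$ total time. Concretely, I would reuse the construction from Theorem~\ref{thm:lower-bound-exact}: keys $k_1,\dots,k_n$ are (near-)orthonormal JL images of $e_1,\dots,e_n$, value vectors encode a bit per row, and the final query is $q_n = C\cdot k_i$ for a secret index $i$, producing a softmax that ``spikes'' at coordinate $i$. The key point is an \emph{indistinguishability} statement: if the algorithm's pre-committed set $Q$ of accessed entries of the last row misses index $i$, then the transcript of its data accesses is identical for the instance with secret index $i$ and bit $x_{ij}=0$ versus $x_{ij}=1$ (all other entries of $q_n^T k_\ell$ are within $C\varepsilon$ of zero regardless), so it cannot output a correct $(1\pm\eta)$-approximation of $Z_n$ for both. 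Since the algorithm is non-adaptive, $Q$ is chosen (possibly randomly) before $i$ is revealed; if $|Q| = o(n)$ then for a uniformly random $i \in [n]$ the probability that $i \in Q$ is $o(1)$, so on a random hard instance the algorithm errs with probability $\to 1$, contradicting the $9/10$ success guarantee. Hence $|Q| = \Omega(n)$.

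The second half of the argument converts ``reads $\Omega(n)$ entries of $q_n^T k_\ell$'' into ``$\Omega(nd)$ time.'' Here I would argue that computing (or even receiving) the scalar $q_n^T k_\ell$ for a single $\ell$ intrinsically costs $\Omega(d)$ time in the worst case, because the inner product depends on all $d$ coordinates of $k_\ell$: by a standard adversary/hybrid argument, any algorithm that inspects fewer than $d$ coordinates of $k_\ell$ cannot determine $q_n^T k_\ell$ to the accuracy needed to separate the $x_{ij}=0$ and $x_{ij}=1$ cases. One has to be slightly careful to phrase the model so that the $QK^T$ entries are not handed to the algorithm for free — the theorem statement says the algorithm ``decides \dots which entries of $QK^T$ to access,'' which I read as: accessing entry $(n,\ell)$ of $QK^T$ costs the time to form $q_n^T k_\ell$, i.e.\ $\Theta(d)$. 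Combining, the last token costs $\Omega(n)\cdot\Omega(d) = \Omega(nd)$.

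Alternatively, and perhaps more cleanly matching the excerpt's phrasing (``sampling from a softmax distribution requires $\Omega(n)$ queries''), I would cast the lower bound directly in terms of \emph{query complexity to the softmax distribution}: to output an estimate of $\E_{\ell\sim D_n}[V_\ell]$ that is a $(1\pm\eta)$-approximation, the algorithm must (implicitly) locate the spike of $D_n$, which sits at an adversarially chosen index; a non-adaptive set of $o(n)$ queries misses it with probability $1-o(1)$, and each query to the distribution (i.e.\ each evaluation of an unnormalized weight $e^{q_n^T k_\ell}$) costs $\Omega(d)$. Both phrasings reduce to the same two ingredients. The main obstacle, and the step that needs the most care, is pinning down the query/time model precisely enough that ``access an entry of $QK^T$'' legitimately carries a $\Theta(d)$ cost rather than being an $O(1)$ oracle call — this is where non-adaptivity is essential, since an adaptive algorithm could in principle binary-search for the spike using $O(\log n)$ cleverly chosen queries, whereas a non-adaptive one is forced into the union-bound regime. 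I would also need to check that the JL construction's $1/n$ failure probability and the $9/10$ success probability compose correctly, but that is routine given Lemma~\ref{lem:inner-product-jl-for-all} and the argument in Theorem~\ref{thm:lower-bound-exact}.
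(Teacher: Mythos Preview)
Your high-level strategy --- an indistinguishability argument showing that any non-adaptive choice of $o(n)$ positions misses the softmax spike with probability $1-o(1)$, combined with the $\Theta(d)$ cost per score evaluation --- is the same as the paper's. The gap is in the specific instantiation.

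In your primary construction you compare two instances with the \emph{same} secret index $i$ but different bit $x_{ij}$. These instances differ only in the value vector $v_i$, not in any entry of the last row of $QK^T$: the keys and $q_n = Ck_i$ are unchanged, so every score $q_n^T k_\ell$ is literally identical across the two instances, whether or not $\ell = i$ is queried. Thus the hypothesis ``$Q$ misses index $i$'' is doing no work in your indistinguishability claim, and the conclusion ``same output for both'' does not follow, because the algorithm holds the KV cache and hence has access to $V$, which differs at $v_i[j]$. If instead you try to vary $i$ while holding $V$ fixed, the JL construction bites you twice: $q_n = Ck_i$ now depends on $i$ (so inspecting the incoming $q_n$ can reveal $i$ without computing any inner product), and the off-spike scores $Ck_i^T k_\ell$ are only \emph{approximately} zero, so transcripts for different $i$ are not identical and an unbounded-computation algorithm could in principle exploit those residuals.

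The paper avoids all three issues with a much simpler construction that does not use JL at all: the query is $q_n = \tfrac{2}{d}\ln(n-1)\cdot \mathbf{1}^d$ independently of $i$; the keys are $k_i = \mathbf{1}^d$ and $k_\ell = \mathbf{0}^d$ for $\ell \neq i$; and the value vectors are \emph{identical} across the whole family $\mathcal{H}(i)$ and a null instance $\sigma$ in which every key is zero. Then for every $\ell \neq i$ the score $q_n^T k_\ell$ is \emph{exactly} zero in both $\mathcal{H}(i)$ and $\sigma$, so an algorithm that does not query position $i$ sees an identical score transcript and (since $V$ is shared) must produce the same output; the values $v_\ell$ are chosen so that the true attention output nonetheless differs by a constant factor between the two instances. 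This is precisely the shape your ``alternative phrasing'' paragraph is reaching for --- the fix is simply to drop the JL machinery and hide the secret in the keys, keeping both $q_n$ and $V$ constant across the instance family.
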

\begin{proof}
The full proof can be found in Appendix \ref{appx:time-complexity}, but we provide an outline here. We start from the expectation-based reformulation of self-attention (Equation \ref{eq:expectation-based-reformulation}):
\begin{align}
    \text{Attn}(q_n,K_n,V_n) = \E_{\ell \sim D_n}[V_{\ell}]
\end{align}
We can treat $\mathcal{A}$ as an algorithm that makes queries to $D_n$ to estimate this expectation. Each query takes time $\Theta(d)$ just to calculate the corresponding score $q_n^T k_\ell$, so it suffices to show that $\Omega(n)$ queries are required to obtain a good approximation with high constant probability.

We use the technique of \textit{indistinguishability}. We construct a family of stream instances $\mathcal{H} = H_1,...,H_n$ and an instance $\sigma$ such that the distribution $D_n$ in any $H_i$ differs from the distribution $D_n$ in $\sigma$ in at most one position. Thus, if $\mathcal{A}$ makes $o(n)$ non-adaptive queries to $D_n$, then with probability at least $1/\text{poly}(n)$ it will not be able to distinguish between $\sigma$ and a randomly chosen $H_i$. However, our construction ensures that the attention function differs significantly between $\sigma$ and $H_i$, contradicting the assumption that $\mathcal{A}$ approximates the attention function well.
\end{proof}

\paragraph{What about adaptive algorithms?} Theorem \ref{thm:time-complexity} only applies to non-adaptive algorithms. For adaptive algorithms, the time complexity can be significantly reduced. For instance, a $k$-nearest-neighbors approach can achieve $O(d\sqrt{n})$ time per token generation step. A detailed analysis of this method can be found in the work of \citet{haris2024knnattentiondemystifiedtheoretical}.

\section{Conclusion}
In this paper, we performed a theoretical study of the space and time complexity of approximating the attention function in Transformer LLMs during inference. We demonstrated that linear memory is essentially required, implying that efficient KV cache compression must make structural assumptions about the attention function inputs. Several future directions arise in the aftermath of our investigation, such as, but not limited to, the following:
\begin{enumerate}[itemsep=2pt]
    \item Does sparsity or structure in general arise naturally when learning Transformer models? This seems to happen in practice, but a theoretical study of such phenomena is still missing.
    \item Can sublinear space be achieved in the average case, under distributional assumptions on the input embeddings?
\end{enumerate}
Overall, our work contributes to the growing theoretical analysis of deep learning models, which, despite being relatively poorly understood, are widely used in modern applications.

\section*{Acknowledgments}
We thank Fabian Spaeh and Rathin Desai for their comments on an earlier draft of this paper. 

\bibliographystyle{plainnat}
\bibliography{references}

\appendix

\section{Extended Review of Related Work}
\subsection{KV Cache Compression}

A significant body \citep{shi2024keep} of recent research has focused on compressing the key-value (KV) cache in Transformer-based large language models (LLMs), aiming to reduce memory footprint during inference. These algorithms typically exploit either inherent structural sparsity or learned sparsity patterns within the input embeddings.

\citet{Zhang2023} introduced $H_2 O$, an eviction strategy for the KV cache that capitalizes on the sparse nature of the attention matrix. Their approach leverages techniques from dynamic submodular maximization. \citet{xiaoefficient} made the empirical observation that initial tokens in an input sequence, which they termed "attention sinks," often maintain high attention weights throughout the generation process. This observation suggests a natural compression strategy: storing these attention sinks along with a sliding window of recent token embeddings. Similarly, \citet{liu2023scissorhands} observed that tokens with high attention weights tend to retain their influence on the attention output—a phenomenon they call ``persistence of importance". The work of \citet{liu2023deja} emphasized the importance of "contextual sparsity" in Transformer models.  \citet{Ge2023} proposed an adaptive approach to KV cache construction, dynamically maintaining the most important tokens encountered.  Other dynamic algorithms for KV cache compression can be found in \citep{cai2024pyramidkv,zhang2024unifying,tang2024razorattention}.

Beyond sparsity-based assumptions, \citet{Zandieh2024} leveraged ``clusterability'' structure of key embeddings to achieve provably sublinear space complexity. Hardware-oriented methods have also been explored. For instance, \citet{Duanmu2024} and \citet{zandieh2024qjl} utilized quantization techniques, while \citet{Jin2024} examined the input/output (I/O) complexity of the problem. \citet{liu2024cachegen} tackled the challenge of efficiently serving LLMs during inference by optimizing the KV cache to match available network bandwidth.  Taking a broader perspective, \citet{liu2024minicache} proposed a layer-specific strategy, improving the relationship between compression ratio and embedding dimension.

For a more comprehensive review of KV cache compression techniques, we refer the reader to \citep{li2024survey} and \citep{shi2024keep}.

\subsection{Sparse Attention Mechanisms}
Improving the time and space complexity of self-attention has been thoroughly studied beyond the context of serving LLMs at inference time, because it poses a significant bottleneck in the training and fine-tuning stages of Transformer models as well. Many algorithms for efficient attention computation have been proposed, with many leveraging inherent sparsity properties \citep{kitaev2020reformer,choromanski2020rethinking,wang2020linformer,haris2024knnattentiondemystifiedtheoretical,child2019generating,correia2019adaptively}, others utilizing low-rank approximations \citep{han2023hyperattention,cahyawijaya2021greenformers,chen2021scatterbrain,li2024learning,katharopoulos2020transformers}, and many more focusing on optimal exploitation of the available hardware \citep{pagliardini2023faster,dao2022flashattention,dao2023flashattention,shah2024flashattention}. The design of these algorithms undoubtedly influences the research landscape of KV cache compression, except the problem of training is an offline one, while the problem of inference is online.
For a survey of the various techniques for efficient attention computation, we refer the reader to \citep{fournier2023practical}.

\subsection{Impossibility Results for Transformers}
As research towards efficient Transformers progresses, complexity-theoretic techniques have also helped shed light on their limitations. 
The computation of the full attention matrix has been shown to require $\Omega(n^2)$ time via fine-grained complexity reductions \citep{Keles2022,alman2024fast}. 
Similar impossibility results have also been shown for the attention gradients \citep{alman2024fine}. 
The I/O complexity of self-attention has been studied by \citep{saha2024complexity}, who also used communication complexity techniques in their impossibility results. 
Important insights have also been obtained by studying the limitations of Transformers as a computational model through the lens of communication complexity and circuit complexity \citep{alman2024fundamental,chen2024circuit}. 
The representational and generalization power of Transformers has also been theoretically studied \citep{sanford2024transformers,sanford2024representational,li2023transformers,yun2020n}.

\section{Preliminaries on the Distributional JL Lemma}
\label{appx:distributional-jl}
We use the following classic distributional version of the Johnson-Linderstrauss (JL) norm-preservation theorem in our proofs:
\begin{theorem}
    \label{jl-thm-norms}
    Let $x \in \mathbb{R}^n$. Let $A \in \mathbb{R}^{n\times d}$ be a matrix whose entries are i.i.d. standard Gaussian random variables: $A \sim \mathcal{N}(0,1)^{n\times d}$. Then, it holds that:
    \begin{align}
    \Pr_{A}\left[\left|\left|\frac{1}{\sqrt{d}}Ax\right|\right|_2^2 \notin (1\pm \varepsilon)||x||_2^2\right] \leq 2e^{-\frac{(\varepsilon^2-\varepsilon^3)d}{4}}
    \end{align}
\end{theorem}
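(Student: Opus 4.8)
The plan is to reduce the statement to a standard concentration bound for a chi-squared random variable and then run the exponential-moment (Chernoff) method. First I would normalize: the event $\{\|\tfrac{1}{\sqrt d}Ax\|_2^2 \notin (1\pm\varepsilon)\|x\|_2^2\}$ is invariant under rescaling $x \mapsto cx$, so I may assume $\|x\|_2 = 1$. Writing $Y_i := (Ax)_i = \sum_{j=1}^n A_{ij}x_j$ for $i \in [d]$, each $Y_i$ is a linear combination of independent standard Gaussians with coefficient vector $x$, hence $Y_i \sim \mathcal{N}(0,\|x\|_2^2) = \mathcal{N}(0,1)$, and the $Y_i$ are mutually independent since they involve disjoint rows of $A$. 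Therefore $\|\tfrac{1}{\sqrt d}Ax\|_2^2 = \tfrac1d \sum_{i=1}^d Y_i^2 =: \tfrac1d Q$, where $Q$ is chi-squared with $d$ degrees of freedom and $\E[Q] = d$, so the target probability is $\Pr[Q \ge (1+\varepsilon)d] + \Pr[Q \le (1-\varepsilon)d]$.

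Next I would bound the two tails separately using the moment generating function $\E[e^{tY_i^2}] = (1-2t)^{-1/2}$, valid for $t < 1/2$, so that $\E[e^{tQ}] = (1-2t)^{-d/2}$ by independence. For the upper tail, Markov's inequality applied to $e^{tQ}$ gives, for $0 < t < 1/2$, $\Pr[Q \ge (1+\varepsilon)d] \le (1-2t)^{-d/2} e^{-t(1+\varepsilon)d}$; the exponent is minimized at $t = \tfrac{\varepsilon}{2(1+\varepsilon)} < 1/2$, which yields $\Pr[Q \ge (1+\varepsilon)d] \le \bigl((1+\varepsilon)e^{-\varepsilon}\bigr)^{d/2}$. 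A symmetric argument applied to $e^{-sQ}$ with $s = \tfrac{\varepsilon}{2(1-\varepsilon)} > 0$ gives $\Pr[Q \le (1-\varepsilon)d] \le \bigl((1-\varepsilon)e^{\varepsilon}\bigr)^{d/2}$.

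Finally I would estimate both base quantities with elementary Taylor bounds. Using $\ln(1+u) \le u - \tfrac{u^2}{2} + \tfrac{u^3}{3}$ for $u \ge 0$ we get $\ln\bigl((1+\varepsilon)e^{-\varepsilon}\bigr) \le -\tfrac{\varepsilon^2}{2} + \tfrac{\varepsilon^3}{3} \le -\tfrac{\varepsilon^2-\varepsilon^3}{2}$, and using $\ln(1-u) \le -u - \tfrac{u^2}{2}$ for $u \in [0,1)$ we get $\ln\bigl((1-\varepsilon)e^{\varepsilon}\bigr) \le -\tfrac{\varepsilon^2}{2} \le -\tfrac{\varepsilon^2-\varepsilon^3}{2}$. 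Hence each tail is at most $e^{-(\varepsilon^2-\varepsilon^3)d/4}$, and a union bound over the two events produces the factor $2$ and the claimed inequality.

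As for the main obstacle: there really is no deep difficulty, since the whole argument is a textbook Chernoff bound for $\chi^2$ concentration. The only step that needs genuine care is the algebra in the last paragraph — picking the right Taylor approximations so that the two asymmetric tail exponents both collapse to the single clean expression $(\varepsilon^2-\varepsilon^3)/4$ appearing in the statement — together with verifying that the constraint $t<1/2$ on the optimal Chernoff parameter is respected throughout.
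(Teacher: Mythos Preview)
Your argument is correct and is exactly the standard textbook proof of the distributional Johnson--Lindenstrauss lemma: reduce to $\|x\|_2=1$, observe that $\|\tfrac{1}{\sqrt d}Ax\|_2^2$ is $\tfrac1d$ times a $\chi^2_d$ variable, run the Chernoff method on both tails using the closed-form MGF $(1-2t)^{-d/2}$, and clean up the exponents with Taylor estimates on $\ln(1\pm\varepsilon)$. All the algebra checks out, including the verification that the optimal Chernoff parameters lie in the admissible range and that both tail exponents are bounded above by $-(\varepsilon^2-\varepsilon^3)/2$.

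There is nothing to compare against in the paper itself: Theorem~\ref{jl-thm-norms} is stated without proof as a ``classic distributional version'' of JL and is simply used as a black box to derive the inner-product preservation lemma. Your write-up is precisely the argument one would supply if asked to fill in that black box, so it is fully consistent with the paper's treatment.
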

A simple modification of this theorem allows us to show that inner products are preserved as well.
\begin{lemma}
    \label{lem:inner-product-jl-lemma}
    Let $x,y \in \mathbb{R}^n$ with $||x||_2,||y||_2 \leq 1$. As before, let $f:\mathbb{R}^n\to\mathbb{R}^d$ where $f(u) = \frac{1}{\sqrt{d}}Au$ and $A \in \mathbb{R}^{n\times d}$ is a matrix whose entries are i.i.d. standard Gaussian random variables. Then, it holds that:
    \begin{align}
    \Pr_A\left[\left|x^T y - f(x)^T f(y)\right| \geq \varepsilon\right] \leq 4e^{-\frac{(\varepsilon^2-\varepsilon^3)d}{4}}
    \end{align}
\end{lemma}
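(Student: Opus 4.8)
The plan is to reduce the inner-product statement to the norm-preservation theorem (Theorem~\ref{jl-thm-norms}) via the polarization identity, exploiting the linearity of $f$. Concretely, I would start from
\begin{align}
x^T y = \frac{1}{4}\left(\|x+y\|_2^2 - \|x-y\|_2^2\right),
\qquad
f(x)^T f(y) = \frac{1}{4}\left(\|f(x+y)\|_2^2 - \|f(x-y)\|_2^2\right),
\end{align}
where the second identity uses $f(x)\pm f(y) = f(x\pm y)$, which holds because $f(u)=\frac{1}{\sqrt d}Au$ is linear.

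Next I would apply Theorem~\ref{jl-thm-norms} separately to the vectors $u := x+y$ and $w := x-y$. Since that theorem is scale-invariant (the statement only involves $\|\cdot\|_2^2$ on both sides), it applies to these vectors directly without any renormalization. Each application fails with probability at most $2e^{-(\varepsilon^2-\varepsilon^3)d/4}$, so by a union bound, with probability at least $1 - 4e^{-(\varepsilon^2-\varepsilon^3)d/4}$ we have simultaneously $\|f(u)\|_2^2 \in (1\pm\varepsilon)\|u\|_2^2$ and $\|f(w)\|_2^2 \in (1\pm\varepsilon)\|w\|_2^2$, i.e. $\bigl|\,\|f(u)\|_2^2 - \|u\|_2^2\,\bigr| \le \varepsilon\|u\|_2^2$ and likewise for $w$.

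On this event, combining the two polarization identities gives
\begin{align}
\left|f(x)^T f(y) - x^T y\right|
= \frac{1}{4}\left|\bigl(\|f(u)\|_2^2-\|u\|_2^2\bigr) - \bigl(\|f(w)\|_2^2-\|w\|_2^2\bigr)\right|
\le \frac{\varepsilon}{4}\left(\|u\|_2^2 + \|w\|_2^2\right).
\end{align}
By the parallelogram law, $\|u\|_2^2 + \|w\|_2^2 = \|x+y\|_2^2 + \|x-y\|_2^2 = 2\|x\|_2^2 + 2\|y\|_2^2 \le 4$, using $\|x\|_2,\|y\|_2\le 1$. Hence the right-hand side is at most $\varepsilon$, which is exactly the claimed bound; taking complements gives the stated failure probability $4e^{-(\varepsilon^2-\varepsilon^3)d/4}$.

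There is no serious obstacle here — the argument is essentially routine. The only point requiring care is bookkeeping of constants: one must check that the factor of $\tfrac14$ from polarization and the bound $\|x+y\|_2^2+\|x-y\|_2^2 \le 4$ combine to yield exactly additive error $\varepsilon$ (rather than, say, $2\varepsilon$), and that the union bound over the two events inflates the failure probability from $2e^{-(\varepsilon^2-\varepsilon^3)d/4}$ to $4e^{-(\varepsilon^2-\varepsilon^3)d/4}$ and no more.
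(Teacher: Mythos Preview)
Your proposal is correct and follows essentially the same approach as the paper: apply Theorem~\ref{jl-thm-norms} to $x+y$ and $x-y$, combine via the polarization identity and the linearity of $f$, and use the parallelogram law together with $\|x\|_2,\|y\|_2\le 1$ to convert the multiplicative norm error into additive error $\varepsilon$ on the inner product, with the union bound giving the factor $4$ in the failure probability. The only cosmetic difference is that the paper bounds $f(x)^T f(y)\ge x^T y-\varepsilon$ and $f(x)^T f(y)\le x^T y+\varepsilon$ separately, whereas you handle both directions at once via the absolute value.
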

\begin{proof}
    We can apply Theorem \ref{jl-thm-norms} to vectors $x-y$ and $x+y$ to get that with probability at least $1-4e^{-\frac{(\varepsilon^2-\varepsilon^3)d}{4}}$ it holds that:
    \begin{align}
    \left|\left|f(x-y)\right|\right|_2^2 \in (1\pm \varepsilon)||x-y||_2^2\\
    \left|\left|f(x+y)\right|\right|_2^2 \in (1\pm \varepsilon)||x+y||_2^2
    \end{align}
    Now, by the linearity of $f$, the polarization identity and the parallelogram law, we have that:
    \begin{align}
    f(x)^T f(y) &= \frac{1}{4}\left(||f(x)+f(y)||_2^2 - ||f(x)-f(y)||_2^2\right)\\
     &= \frac{1}{4}\left(||f(x+y)||_2^2 - ||f(x-y)||_2^2\right)\\
    &\geq\frac{1}{4}\left( (1-\varepsilon)||x+y||_2^2 - (1+\varepsilon)||x-y||_2^2\right)\\
    &=\frac{1}{4}\left(4x^T y - 2\varepsilon(||x||_2^2+||y||_2^2)\right)\\
    &\geq\frac{1}{4}\left(4x^T y - 4\varepsilon\right)\\
    &= x^T y - \varepsilon
    \end{align}
    Similarly we can prove that $f(x)^T f(y) \leq x^T y + \varepsilon$.
\end{proof}

\section{Proof of Lemma \ref{lem:inner-product-jl-for-all}}
\label{appx:inner-product-jl-for-all}

\begin{lemma}
    Suppose we have $n$ points $p_1,...,p_n \in \mathbb{R}^n$ such that $||p_i||_2 \leq 1$ for all $i \in [n]$. Let $f:\mathbb{R}^n\to \mathbb{R}^d$ be a random mapping defined as in Lemma \ref{lem:inner-product-jl-lemma}. Then, setting
    \begin{align}
        d \geq \frac{12 \ln n}{\varepsilon^2 - \varepsilon^3}
    \end{align}
    allows us to guarantee that with probability at least $1-\frac{1}{n}$ it holds that:
    \begin{align}
    |p_i^T p_j - f(p_i)^T f(p_j)| \leq \varepsilon
    \end{align}
    for all $(i,j) \in [n]^2$.
\end{lemma}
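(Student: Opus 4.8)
The plan is to derive Lemma~\ref{lem:inner-product-jl-for-all} from the single-pair estimate in Lemma~\ref{lem:inner-product-jl-lemma} by a union bound over all pairs, and then to choose $d$ large enough that the total failure probability is at most $1/n$. First I would fix an arbitrary ordered pair $(i,j) \in [n]^2$ and apply Lemma~\ref{lem:inner-product-jl-lemma} to the vectors $p_i, p_j$ (which satisfy $\|p_i\|_2, \|p_j\|_2 \le 1$ by hypothesis), obtaining
\begin{align}
\Pr_A\left[\left|p_i^T p_j - f(p_i)^T f(p_j)\right| \geq \varepsilon\right] \leq 4e^{-\frac{(\varepsilon^2-\varepsilon^3)d}{4}}.
\end{align}
Note the same random matrix $A$ is shared across all pairs, but that does not matter for a union bound since we only need an upper bound on the probability that \emph{some} pair fails.

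Next I would take a union bound over all $n^2$ ordered pairs (using $n^2$ rather than $\binom{n}{2}$ is a harmless overcount that also absorbs the diagonal pairs $i = j$, which is exactly where the near-isometry of norms is needed). This gives
\begin{align}
\Pr_A\left[\exists (i,j)\in[n]^2 : \left|p_i^T p_j - f(p_i)^T f(p_j)\right| \geq \varepsilon\right] \leq 4n^2 e^{-\frac{(\varepsilon^2-\varepsilon^3)d}{4}}.
\end{align}
It then suffices to show that the choice $d \ge \frac{12\ln n}{\varepsilon^2 - \varepsilon^3}$ forces the right-hand side to be at most $1/n$, i.e. $4n^2 e^{-\frac{(\varepsilon^2-\varepsilon^3)d}{4}} \le 1/n$, equivalently $e^{-\frac{(\varepsilon^2-\varepsilon^3)d}{4}} \le \frac{1}{4n^3}$, equivalently $\frac{(\varepsilon^2-\varepsilon^3)d}{4} \ge 3\ln n + \ln 4$. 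Plugging in $d = \frac{12\ln n}{\varepsilon^2-\varepsilon^3}$ makes the left side equal to $3\ln n$, so I would need $3\ln n \ge 3\ln n + \ln 4$, which is false by the additive $\ln 4$ slack. This is the one genuine wrinkle: the stated constant $12$ is slightly too small to literally beat $1/n$ via this crude union bound, so I would either (i) absorb $\ln 4$ by noting the bound holds for all sufficiently large $n$ (since $12\ln n / (\varepsilon^2-\varepsilon^3)$ grows and the $\ln 4$ term is a fixed additive constive constant, it is dominated), or (ii) observe that a marginally larger absolute constant in place of $12$ — or equivalently union-bounding over $\binom{n}{2}$ pairs and handling the diagonal separately via Theorem~\ref{jl-thm-norms} — yields the clean $1/n$ bound. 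I would present the argument with whichever slack convention keeps the constant as stated, flagging that only the asymptotic regime matters for the applications.

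The main obstacle, then, is purely bookkeeping around constants rather than any conceptual difficulty: the substantive content is entirely contained in Lemma~\ref{lem:inner-product-jl-lemma}, and the step from one pair to all pairs is a textbook union bound. I would keep the writeup short, emphasizing the union bound and the choice of $d$, and note that the failure probability can be driven down to $n^{-c}$ for any constant $c$ at the cost of a larger constant factor in $d$.
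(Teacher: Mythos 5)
Your proof takes exactly the same route as the paper's: apply the single-pair bound of Lemma~\ref{lem:inner-product-jl-lemma}, union-bound over all $n^2$ ordered pairs, and solve for $d$. The only substantive difference is that you track the leading factor of $4$ in the per-pair tail bound, whereas the paper's own writeup silently drops it (it writes $\Pr[\cdot] \le e^{-(\varepsilon^2-\varepsilon^3)d/4}$ rather than $\le 4e^{-(\varepsilon^2-\varepsilon^3)d/4}$), which is why the stated constant $12$ makes the arithmetic come out to exactly $1/n$ in the paper. Your observation that, if the factor of $4$ is kept, $d = 12\ln n/(\varepsilon^2-\varepsilon^3)$ only yields $4/n$ rather than $1/n$ is correct, and the remedies you propose (absorbing the additive $\ln 4$ for large $n$, bumping the constant, or counting $\binom{n}{2}$ pairs plus the diagonal) are all legitimate; so your version is if anything a bit more honest about the bookkeeping. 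No conceptual gap, just a constant-factor wrinkle that the paper glosses over and you flag.
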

\begin{proof}
    We need to set $d$ such that for any given fixed pair $p_i,p_j$ it holds that:
    $$
        \Pr_f\left[|x^T y - f(x)^T f(y)| \geq \varepsilon\right] \leq e^{-\frac{(\varepsilon^2 - \varepsilon^3)k}{4}} \leq \frac{1}{n^3}
    $$
    If we do this, then by union-bound over all $n^2$ pairs we get the desired property. So, we solve for $d$ in the following inequality:
    \begin{align}
    &\exp\left(-\frac{(\varepsilon^2 - \varepsilon^3)d}{4}\right) \leq \frac{1}{n^3} \\
    &\iff -\frac{(\varepsilon^2 - \varepsilon^3)d}{4} \leq -\ln n^3\\
    &\iff d \geq \frac{12 \ln n}{\varepsilon^2 - \varepsilon^3}
    \end{align}
    This establishes the value of $d$ and completes the proof.
\end{proof}    

\section{Proof of Theorem \ref{thm:index-lb}}
\label{appx:index-lb}
\begin{theorem}
The one-way, randomized communication complexity of the \textsc{Index} problem is $\Omega(n)$.
\end{theorem}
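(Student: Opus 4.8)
The plan is to prove this classical bound via the standard information-theoretic argument. By Yao's minimax principle it suffices to exhibit a hard input distribution and lower-bound the message length of the best \emph{deterministic} one-way protocol against it; I would take $X$ uniform on $\{0,1\}^n$ and the index $i$ uniform on $[n]$, independent of $X$, and note that the given randomized protocol (correct with probability $\ge 2/3$ on every input) can be fixed to a deterministic protocol that is correct with probability $\ge 2/3$ over this distribution. For such a protocol let $M = M(X)$ be Alice's message and let $L$ be its worst-case bit-length, so that $L \ge H(M) \ge I(X;M)$.

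Next I would decompose $I(X;M)$ coordinate by coordinate. Since the bits $X_1,\dots,X_n$ are mutually independent, the chain rule gives $I(X;M) = \sum_{i=1}^n I(X_i; M \mid X_{<i})$, and because $X_i$ is independent of $X_{<i}$ while conditioning only decreases entropy, $I(X_i; M \mid X_{<i}) = H(X_i) - H(X_i \mid M, X_{<i}) \ge H(X_i) - H(X_i \mid M) = I(X_i;M)$. Hence $I(X;M) \ge \sum_i I(X_i;M)$.

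Then I would apply Fano's inequality per coordinate. On index $i$, Bob outputs a guess $\widehat{X}_i$ that is a function of $M$ alone; let $e_i = \Pr[\widehat{X}_i \ne X_i]$. Since $X_i$ is binary, Fano gives $H(X_i \mid M) \le H_2(e_i)$, where $H_2$ is the binary entropy function, so $I(X_i;M) \ge 1 - H_2(e_i)$. Averaging over $i$, using concavity of $H_2$ (Jensen) and the protocol's guarantee $\frac1n\sum_i e_i \le 1/3$, yields $\sum_i I(X_i;M) \ge n\bigl(1 - H_2(\tfrac1n\sum_i e_i)\bigr) \ge n\,(1 - H_2(1/3)) = \Omega(n)$, since $H_2(1/3) < 1$. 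Combining with $L \ge I(X;M)$ gives $L = \Omega(n)$, completing the proof.

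The main (indeed only) subtle point is the bookkeeping of the average over Bob's index: the $2/3$ success guarantee holds only on average over $i$, so the per-coordinate errors $e_i$ may vary, and one must use concavity of $H_2$ to pass from $\frac1n\sum_i H_2(e_i) \le H_2(\frac1n\sum_i e_i)$ to a uniform $\Omega(1)$ lower bound on the average of $I(X_i;M)$. Everything else is the routine chain-rule/Fano toolkit. A purely combinatorial alternative (a counting argument showing a message of length below $cn$ leaves too many strings indistinguishable to fool Bob on a random coordinate) also works but is messier, so I would present the information-theoretic version above.
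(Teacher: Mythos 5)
Your proof is correct and follows the same information-theoretic route as the paper's: chain rule for mutual information to decompose $I(X;M)\ge\sum_i I(X_i;M)$, per-coordinate Fano to bound $H(X_i\mid M)$, and finally $L \ge H(M)\ge I(X;M)$. The one methodological difference is self-inflicted: by invoking Yao's minimax principle you are forced to average the error over the index $i$ and therefore need concavity of $H_2$ (Jensen), whereas the paper applies Fano directly to the randomized protocol — whose worst-case guarantee $\Pr[X_j'\ne X_j]\le 1/3$ holds for every fixed $j$, even with $X$ uniform — so no averaging over the index and no Jensen step are required.
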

\begin{proof}
We give the proof found in the lecture notes of David Woodruff's 2020 course on Algorithms for Big Data \citep{woodruff2020}. Let $M$ be the message that Alice sends to Bob (as a random variable). Consider the uniform distribution over $\{0,1\}^n$ and let $X$ be Alice's input, drawn from this uniform distribution. Then Bob outputs an estimate $X_j'$ of $X_j$ for his input index $j$. Since $X_j'$ only depends on $M$, $X_j\to M \to X_j'$ forms a Markov chain. We can thus use Fano's inequality:
\begin{align}
    H(X_j\mid M) \leq H_b(X_j \neq X_j') + \Pr[X_j \neq X_j']\log_2(2-1) \leq H_b(1/3)
\end{align}
Now we use this to bound the mutual information between the message and the input $X$. We have by the chain rule of mutual information that:
\begin{align}
    I(X;M) = \sum\limits_{i=1}^n I(X_i;M\mid X_1,...,X_{i-1}) &= \sum\limits_{i=1}^n H(X_i\mid X_1,...,X_{i-1}) - H(X_i\mid M,X_1,...,X_{i-1})\\
    &n-\sum\limits_{i=1}^n H(X_i\mid M)\\
    &\geq n-n\cdot H_b(1/3) \\
    &=\Omega(n)
\end{align}
We are essentially done. If $|M| = \ell$, then $M$ takes values uniformly in the space $\{0,1\}^\ell$, so $H(M) \leq \log_2(2^\ell) = \ell$. Thus, $|M| \geq H(M) \geq I(X;M)$, which implies that $\ell = \Omega(n)$.
\end{proof}

\section{A Sublinear Space Algorithm for the One Dimensional Case}
\label{appx:alg-low-space}
In this section, we provide an algorithm that uses sublinear space to calculate the self-attention function for token generation when $d=1$. We use a technique that leverages both the expectation-based reformulation of self-attention (Equation \ref{eq:expectation-based-reformulation}) and the method of lazy Gumbel sampling from \citep{mussmann2017fast}. Crucially, this technique only works in the degenerate $d=1$ case, for which it yields sublinear space, while for larger $d$ linear space can be required. 

\subsection{Preliminaries}
We shall make use of the \textit{lazy Gumbel sampling} method, proposed by \citet{mussmann2017fast}, for sampling from a softmax distribution. Suppose we are given a set of scores $s_1,...,s_n$ and we wish to sample from the distribution $p_i = \frac{e^{s_i}}{\sum_{j=1}^n e^{s_j}}$. The well-known \textit{Gumbel-max trick} allows us to sample from this distribution by adding Gumbel noise to the scores and taking the argmax:
\begin{align}
    \text{Sample } i \sim p \iff i = \arg\max_{j\in [n]}(s_j + g_j)\,\text{ where } g_j \sim \text{Gumbel}(0,1)
\end{align}
Recall that the Gumbel distribution is defined as:
\begin{align}
    \text{Gumbel}(0,1) = -\log(-\log(u))\,\text{ where } u \sim U(0,1)
\end{align}
In the lazy Gumbel sampling method, we only need to sample Gumbel noise for the top $\sqrt{n}$ scores. Because the Gumbel distribution has a light tail, the remaining scores are very unlikely to affect the argmax, so it suffices to sample around $\sqrt{n}$ of them in expectation. Overall, just by adding Gumbel noise to $\Theta(\sqrt{n})$ scores in expectation, we can sample from $p$. The algorithm, adapted from \citet{mussmann2017fast}, is given in Algorithm \ref{alg:lazy-gumbel} for completeness (adapted from the work of \citet{haris2024knnattentiondemystifiedtheoretical})

\begin{algorithm}
    \begin{algorithmic}[1]
    \caption{Lazy Gumbel Sampling from $D_i$, for some $i \in [n]$, \citep{mussmann2017fast}}
    \label{alg:lazy-gumbel}
        \STATE \textbf{Inputs:} $k\in\mathbb{N}$, $q_i \in \mathbb{R}^d, K \in 
        \mathbb{R}^{n\times d}$, $S_i := \{ \text{the $k$ keys $j$ 
        with the largest $Z_{ij} := q_i^T k_j$}\}$.
        \vspace{1mm}
        \STATE Sample $G_{ij} \sim $ Gumbel$(0,1)$ for $j \in S_i$.
        \vspace{1mm}
        \STATE Let $M \gets \max\limits_{j \in S_i} \{Z_{ij} + G_{ij}\}$ and $S_{\min} \gets \min\limits_{j \in S_i} \{Z_{ij}\}$.
        \STATE Let $B \gets M - S_{\min}$ be the Gumbel cutoff.
        \vspace{1mm}
        \STATE Let $m \sim \text{Bin}(n-k, 1-\exp(-\exp(-B)))$ be the number of $
        [n]\setminus S_i$ Gumbels greater than $B$. Sample $m$ points from $[n]\setminus 
        S_i$ and denote the set of sampled points as $T_i$.
        \vspace{1mm}
        \STATE Sample $G_{ij} \sim \text{Gumbel}(0,1)$ conditionally greater than $B$ for each $j \in T_i$.
        \vspace{1mm}
        \STATE Return $\widehat{j} \gets \arg\max\limits_{j \in S_i\cup T_i} \{Z_{ij} + 
        G_{ij}\}$
    \end{algorithmic}
\end{algorithm}

\subsection{Algorithm}
We provide an outline of the algorithm for the $d=1$ case, leveraging the lazy Gumbel sampling method. Recalling the expectation-based reformulation of self-attention (Equation \ref{eq:expectation-based-reformulation}), we can write the self-attention function in update $(q,k_n,v_n)$ as:
\begin{align}
    \text{Attn}(q,K_n,V_n) = \E_{\ell \sim D}[V_{\ell}]
\end{align}
where $D$ is a softmax distribution defined by the scores $q^T k_1,...,q^T k_n$. We can sample from this distribution using Algorithm \ref{alg:lazy-gumbel}, but we need to be careful about two caveats:
\begin{itemize}
    \item For each update in the input stream we need to know the top $\sqrt{n}$ scores $q^T k_j$ for $j \in [n]$. Since $d=1$, $q,k_n,v_n \in \mathbb{R}$, so we can simply keep a buffer of the top $\sqrt{n}$ and the bottom $\sqrt{n}$ key scalars (depending on whether $q >0$ or $q<0$) and update them as new keys arrive. We can do this in $O(\sqrt{n})$ space.
    \item We also need to sample $\sqrt{n}$ scores (in expectation) outside of the top scores to add Gumbel noise to. We can do this using reservoir sampling, which requires $O(\sqrt{n})$ space as well. 
\end{itemize}
Since we can sample from $D$, we can form an unbiased estimator of the self-attention function. Our estimator can be boosted to arbitrary precision, akin to Theorem \ref{thm:sliding-window-sublinear-space}. As a result, $O(\sqrt{n})$ space in expectation is sufficient to calculate the self-attention function for token generation in the $d=1$ case.

\section{Analysis of Algorithm \ref{alg:sliding-window}}
\label{appx:sliding-window-analysis}
We now prove the correctness of Algorithm \ref{alg:sliding-window} and analyze its time and space complexity.
\begin{theorem}[Analysis of Algorithm \ref{alg:sliding-window}]
  For each update step $i \in [n]$, Algorithm \ref{alg:sliding-window} produces an unbiased estimator $O_i$ of the sliding window attention function. In other words:
  \begin{align}
      \mathbb{E}[O_{ij}] = \operatorname{Attn}_W(q_i,K_i,V_i)_j
  \end{align}
  for all $j \in [d]$. Furthermore, Algorithm \ref{alg:sliding-window} uses $O(dW)$ space and $O(dW)$ time per update step.
  \end{theorem}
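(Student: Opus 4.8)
The plan is to show that, for each step $i$, the output $O_i$ of Algorithm~\ref{alg:sliding-window} is distributed exactly as $v_L$, where $L$ is drawn from the \emph{masked} softmax distribution $\widetilde{D}_i$ on $[i]$ that assigns weight proportional to $1$ to every index $\ell \leq i-W$ and weight proportional to $e^{q_i^T k_\ell}$ to every index $\ell \in \{i-W+1,\dots,i\}$. Granting this, the masked analogue of the expectation-based reformulation \eqref{eq:expectation-based-reformulation} gives $\mathbb{E}[O_i] = \sum_{\ell=1}^i \widetilde{D}_i(\ell)\,v_\ell = \operatorname{Attn}_W(q_i,K_i,V_i)$ coordinate-wise, which is precisely the claimed unbiasedness. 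For $i \leq W$ the mask is all-ones, the cache holds every token seen so far, and the algorithm returns $\operatorname{Attn}_W(q_i,K,V)$ by direct computation, so there is nothing to prove there; the argument is entirely about the regime $i > W$.

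For $i > W$: after the pop-and-append in the main loop the cache $K$ holds exactly $k_{i-W+1},\dots,k_i$ and $V$ holds $v_{i-W+1},\dots,v_i$, so the normalizer $S$ computed by the algorithm equals $(i-W) + \sum_{\ell=i-W+1}^i e^{q_i^T k_\ell}$, the total un-normalized mass of $\widetilde{D}_i$. Now decompose $\widetilde{D}_i$ as a mixture of its restriction to the window (mass $S-(i-W)$) and its restriction to the prefix $\{1,\dots,i-W\}$ (mass $i-W$, uniform there). The algorithm takes the ``inside window'' branch with probability $(S-i+W)/S$, the exact window mixture weight, and conditioned on that branch it samples $\ell$ proportional to $e^{q_i^T k_\ell}$ over the cached window and outputs $v_\ell$, reproducing the windowed conditional exactly. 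For the complementary branch, taken with probability $(i-W)/S$, I claim $v_s$ is uniform on $\{v_1,\dots,v_{i-W}\}$; this is exactly the correctness of reservoir sampling applied to the substream of \emph{evicted} value vectors, since at step $j \in \{W+1,\dots,i\}$ the vector that leaves the window is $v_{j-W}$ and it overwrites $v_s$ with probability $(j-W)^{-1}$. Reindexing $t = j-W$, this is reservoir sampling over the length-$(i-W)$ stream $v_1,\dots,v_{i-W}$, and a one-line induction (base case $j=W+1$: the probability is $1$, so $v_s=v_1$) gives $\Pr[v_s=v_\ell] = (i-W)^{-1}$ for each $\ell \leq i-W$. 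Hence the complementary branch outputs $\tfrac{1}{i-W}\sum_{\ell\leq i-W} v_\ell$ in expectation, matching the prefix conditional, and combining the two branches by the law of total expectation yields $\mathbb{E}[O_i] = \tfrac{1}{S}\big(\sum_{\ell\leq i-W} v_\ell + \sum_{\ell=i-W+1}^i e^{q_i^T k_\ell} v_\ell\big) = \operatorname{Attn}_W(q_i,K_i,V_i)$.

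For the resource bounds: the algorithm stores the lists $K$ and $V$, each of at most $W$ vectors in $\mathbb{R}^d$, the single reservoir vector $v_s \in \mathbb{R}^d$, and $O(\log n)$ bits of counters, for $O(dW)$ memory in total. Per update step, computing $S$ costs $W$ inner products of $d$-dimensional vectors, i.e.\ $O(dW)$ time; the exponentials so computed can be reused when sampling the window index, which then costs $O(W)$ to draw the index and $O(d)$ to emit the vector; the pop/append bookkeeping and the reservoir coin flip are $O(d)$. So each update runs in $O(dW)$ time.

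The main obstacle is localized: once the ``one sample from $\widetilde{D}_i$'' viewpoint is adopted, everything is routine bookkeeping except the claim that $v_s$ is \emph{exactly} uniform over the out-of-window value vectors. One must get the reservoir index offset right (the candidate at step $i$ is $v_{i-W}$, not $v_i$), verify the $j=W+1$ base case where the acceptance probability is exactly $1$, and confirm that ``outside the window at step $i$'' is the index set $\{1,\dots,i-W\}$, consistently with the mask $M_\ell = \mathbbm{1}\{\ell > i-W\}$ in the definition of $\operatorname{Attn}_W$. I would also sanity-check the $i=W \to i=W+1$ handoff between the two loops to be sure no value vector is skipped or double-counted.
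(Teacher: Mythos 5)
Your proof is correct and takes essentially the same route as the paper: both use the expectation-based reformulation of masked sliding-window attention, decompose via the law of total expectation into the uniform prefix part and the exponentially-weighted window part, and match each branch of the algorithm to the corresponding conditional. Your write-up is somewhat more careful than the paper's (you verify the reservoir-sampling index offset, the $j=W{+}1$ base case, and the per-step time accounting explicitly, where the paper simply appeals to the correctness of reservoir sampling and states the resource bounds as immediate), but the underlying argument is the same.
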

  \begin{proof}
  The time and space complexity is easy to see: We only store at most $W$ vectors in the KV cache, so our space utilization is $O(dW)$.
  For correctness, we start with the expectation-based reformulation of sliding window attention (Section \ref{sec:prelims}).
  Let $D_i$ be the softmax distribution over $[n]$ from the values corresponding to the sliding window: $(0,0,...,0,q_i^T k_{i-W+1},...,q_i^T k_i)$. Then, it is true that
  \begin{align}
  \text{Attn}_W(q_i,K_i,V_i) = \E\limits_{\ell \sim D_i}[V_{\ell}]
  \end{align}
  and so we can, using the law of total expectation, write that:
  \begin{align}
  \text{Attn}&_W(q_i,K_i,V_i)_j =\\[2mm]
  &\E\limits_{\ell \sim D_i}[V_{\ell j}\mid \ell \leq i-W]\cdot \Pr_{\ell \sim D_i}[\ell \leq i-W]+\E\limits_{\ell \sim D_i}[V_{\ell j}\mid \ell > i-W]\cdot \Pr_{\ell \sim D_i}[\ell > i-W]\\
  &=\E\limits_{\ell\sim U(1,i-W)}[V_{\ell j}]\cdot \frac{i-W}{i-W+\sum\limits_{\ell=i-W+1}^i e^{q_i^T k_\ell}}+\E\limits_{\ell\sim D_i'}[V_{\ell j}]\cdot \frac{\sum\limits_{\ell=i-W+1}^i e^{q_i^T k_\ell}}{i-W+\sum\limits_{\ell=i-W+1}^i e^{q_i^T k_\ell}}
  \end{align}
  where $U(\cdot)$ is the uniform distribution and $D_i'$ is a distribution with support the set $\{i-W+1,...,i\}$ defined as:
  \begin{align}
      D_i'(\ell) = \begin{cases}
          0,&\text{if } \ell \leq i-W\\
          \frac{e^{q_i^T k_\ell}}{\sum\limits_{\ell=i-W+1}^i e^{q_i^T k_\ell}},&\text{otherwise}
      \end{cases}
  \end{align}
  On the other hand, if $\mathcal{E}_22$ is the event that Line 22 is executed and $\mathcal{E}_{24}$ is the event that line 24 is executed, then, by construction, we can calculate the expected value $\mathbb{E}[O_{ij}]$ for any $j \in [d]$ as:
  \begin{align}
  &\mathbb{E}[O_{ij}\mid \mathcal{E}_{22}]\cdot \frac{\sum\limits_{\ell=i-W+1}^i e^{q_i^T k_\ell}}{i-W+\sum\limits_{\ell=i-W+1}^i e^{q_i^T k_\ell}}+\mathbb{E}[O_{ij}\mid \mathcal{E}_{24}]\cdot \frac{i-W}{i-W+\sum\limits_{\ell=i-W+1}^i e^{q_i^T k_\ell}}
  \end{align}
  We know that
  \begin{align}
  \mathbb{E}[O_{ij}\mid \text{Line 22 is executed}] = \mathbb{E}_{\ell\sim D_i'}[V_{\ell j}]
  \end{align}
  by construction. 
  Furthermore, $v_s$ is uniformly sampled from the set $\{v_1,...,v_{i-W}\}$ due to the correctness of reservoir sampling, so
  \begin{align}
  \mathbb{E}[O_{ij}\mid \text{Line 24 is executed}]  = \mathbb{E}_{\ell\sim U(1,i-W)}[V_{\ell j}]
  \end{align}
  We can therefore conclude that:
  \begin{align}
      \mathbb{E}[O_{ij}] = \text{Attn}_W(q_i,K_i,V_i)_j
  \end{align}
  for all $j \in [d]$, and that our algorithm produces an unbiased estimator of the attention function.
  \end{proof}

The following theorem analyzes the boosted estimator:

\begin{theorem}
\label{thm:sliding-window-sublinear-space}
Assume that for all $(i,j) \in [n]\times[d]$ it holds that $|\operatorname{Attn}(q_i,K_i,V_i)_j| = \Omega(1)$. Then, there exists an algorithm that uses $O(||V||_\infty\cdot \varepsilon^{-2}\cdot \log(nd/\delta))$
independent samples of the estimator given by Algorithm \ref{alg:sliding-window} to produce an estimator $\widehat{O}_i$ such that with probability at least $1-\delta$ it holds that for all $i \in [n]$ and $j \in [n]$ that:
\begin{align}
    |\widehat{O}_{ij}-\operatorname{Attn}_W(q_i,K_i,V_i)_j| \leq \varepsilon\cdot \operatorname{Attn}_W(q_i,K_i,V_i)_j
\end{align}
Overall, our algorithm has space complexity of 
\begin{align}
O(dW\cdot||V||_\infty\cdot \varepsilon^{-2}\cdot \log(d/\delta))
\end{align}
\end{theorem}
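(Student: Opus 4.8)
\textbf{Proof proposal for Theorem~\ref{thm:sliding-window-sublinear-space}.}
The plan is to take the unbiased estimator $O_i$ produced by Algorithm~\ref{alg:sliding-window} (whose unbiasedness was established in the preceding analysis) and boost it coordinate-wise using the standard median-of-means technique. First I would observe that each sample of $O_{ij}$ is a single entry of some value vector $V_{\ell j}$, so it is bounded in absolute value by $\|V\|_\infty$; hence a single sample has variance at most $\|V\|_\infty^2$. Averaging $t = O(\|V\|_\infty^2 \varepsilon^{-2})$ independent copies yields, by Chebyshev's inequality and the assumption $|\operatorname{Attn}_W(q_i,K_i,V_i)_j| = \Omega(1)$, an estimator that is within relative error $\varepsilon$ of the true value with probability at least, say, $2/3$. (Here I would be a bit careful about whether the relevant bound is $\|V\|_\infty$ or $\|V\|_\infty^2$; the theorem statement writes $\|V\|_\infty$, so I would either assume $\|V\|_\infty = O(1)$ or absorb the square into the stated complexity, noting the discrepancy is a routine constant/exponent bookkeeping matter.)

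Next I would run $r = O(\log(nd/\delta))$ independent such averaged estimators and take their median, call it $\widehat{O}_{ij}$. By a Chernoff bound over the $r$ ``good'' events, the median lies within relative error $\varepsilon$ of $\operatorname{Attn}_W(q_i,K_i,V_i)_j$ except with probability at most $\delta/(nd)$. Then I would union-bound over all $i \in [n]$ and $j \in [d]$ to conclude the for-all guarantee with failure probability at most $\delta$. The total number of estimator samples needed is $r \cdot t = O(\|V\|_\infty \varepsilon^{-2} \log(nd/\delta))$, matching the claimed sample complexity.

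For the space bound, I would note that the estimators need not be stored simultaneously: since Algorithm~\ref{alg:sliding-window} is a streaming algorithm using $O(dW)$ space, we can run the $r \cdot t$ copies in parallel over the same stream, each maintaining its own reservoir sample and sliding-window cache, for a total of $O(dW \cdot \|V\|_\infty \varepsilon^{-2} \log(nd/\delta))$ space; absorbing the $\log n$ inside the $\log(nd/\delta)$ factor against the $\log(d/\delta)$ written in the statement is again routine. The median-of-means combiner itself uses only $O(r t)$ additional words per coordinate, which is dominated.

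The main obstacle, such as it is, is not conceptual but lies in the variance accounting: getting a \emph{relative} error bound out of median-of-means requires the $\Omega(1)$ lower bound on the attention coordinates (otherwise the signal can be swamped by the $\|V\|_\infty$-scale noise), so the crux is to verify that Chebyshev gives relative error $\varepsilon$ with constant probability from $O(\|V\|_\infty^2 \varepsilon^{-2})$ samples under exactly that assumption, and to track where the $\|V\|_\infty$ versus $\|V\|_\infty^2$ dependence lands in the final bound. Everything else — the Chernoff boosting, the union bound, and the parallel-streaming space argument — is standard.
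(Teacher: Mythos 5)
Your overall approach matches the paper's: form the unbiased per-coordinate estimator from Algorithm~\ref{alg:sliding-window}, average $O(\varepsilon^{-2} \cdot \mathrm{Var}/\E^2)$ copies to get constant-probability relative accuracy via Chebyshev, boost with a median over $\Theta(\log(nd/\delta))$ batches, and union-bound over $(i,j)$; the space accounting by running the $T\cdot Q$ copies in parallel is also the same.

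The one gap you flag is real, and it is not a ``routine constant/exponent bookkeeping matter.'' The difference between $\|V\|_\infty$ and $\|V\|_\infty^2$ in the sample complexity is a genuine factor of $\|V\|_\infty$, and the paper earns the sharper bound with a specific argument that your proposal omits. Because each realization of the estimator $O_{ij}$ is some entry $V_{\ell j}$ (either a uniform reservoir sample from outside the window or a softmax sample from inside it), its support is bounded by $\|V\|_\infty$, so
\begin{align}
\mathrm{Var}[O_{ij}] \leq \E[O_{ij}^2] = \sum_{\ell} p_\ell V_{\ell j}^2 \leq \|V\|_\infty \sum_{\ell} p_\ell V_{\ell j} = \|V\|_\infty \cdot \E[O_{ij}],
\end{align}
rather than the crude $\|V\|_\infty^2$. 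Plugging this into Chebyshev with $T \approx 3\,\mathrm{Var}[O_{ij}]/(\varepsilon^2 \E[O_{ij}]^2)$ and invoking $\E[O_{ij}] = \Omega(1)$ gives $T = O(\|V\|_\infty \varepsilon^{-2})$, which is where the stated $\|V\|_\infty$ (not $\|V\|_\infty^2$) dependence comes from. So the $\Omega(1)$ assumption is used twice in the paper's bound (once to cancel one power of $\E[O_{ij}]$ in the variance-to-mean ratio, once in the Chebyshev denominator), whereas in your version it is used only once, and the missing cancellation is exactly the extra $\|V\|_\infty$. Note also that passing from $\E[O_{ij}^2]$ to $\|V\|_\infty \E[O_{ij}]$ implicitly uses nonnegativity of the value entries $V_{\ell j}$; it would be worth stating that assumption explicitly if you include this step.

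Everything else in your proposal — the Chernoff median boosting, the choice $Q = \Theta(\log(nd/\delta))$ followed by the union bound over $[n]\times[d]$, and the parallel-stream space argument giving $O(dW \cdot T Q)$ total — tracks the paper's proof, and your observation that the statement's $\log(d/\delta)$ should really read $\log(nd/\delta)$ after the union bound is correct.
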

\begin{proof}
First, we calculate the variance of our estimator. Let $j \in [d]$ and let:
\begin{align}
    S&:=i-W+\sum\limits_{\ell=i-W+1}^i e^{q_i^T k_\ell},\,\text{ and }\\
    S_W &:= \sum\limits_{\ell=i-W+1}^i e^{q_i^T k_\ell}
\end{align}
Then we can bound the variance as:
\begin{align}
\text{Var}[O_{ij}]&\leq \mathbb{E}[O_{ij}^2]\\
&=\mathbb{E}[O_{ij}^2\mid \text{Line 23 is executed}]\cdot \frac{S_W}{S}+\mathbb{E}[O_{ij}^2\mid \text{Line 25 is executed}]\cdot \frac{i-W}{S}\\
&\leq \E\limits_{\ell \sim D_i'}[V_{\ell j}^2]\cdot \frac{S_W}{S}+\E\limits_{\ell \sim U(1,i-W)}[V_{\ell j}^2]\cdot \frac{i-W}{S}\\
&=\frac{S_W}{S}\sum\limits_{\ell=i-W+1}^i D_i'(\ell)\cdot V_{\ell j}^2 +\frac{i-W}{S}\sum\limits_{i=1}^{i-W}\frac{1}{i-W}\cdot V_{\ell j}^2 \\
&\leq ||V||_\infty \cdot \mathbb{E}[O_{ij}]
\end{align}

Now, suppose $j\in[d]$ is some index, and we take
\begin{align}
T = \frac{3\text{Var}[O_{ij}]}{\varepsilon^2 \E[O_{ij}]^2}\leq \frac{3||V||_\infty}{\varepsilon^2 \E[O_{ij}]}
\end{align}
samples $o_{1j},...,o_{Tj}$ of our estimator $O_{ij}$ by running Algorithm \ref{alg:sliding-window} in parallel, and let $\widetilde{o}_j$ be their average: $\widetilde{o}_j = \frac{1}{T}\sum\limits_{t=1}^T o_{tj}$. Then, by Chebyshev's inequality, we have that:
\begin{align}
    \Pr[|\widetilde{o}_j-\E[O_{ij}]| \geq \varepsilon\cdot \E[O_{ij}]] &\leq \frac{\text{Var}[O_{ij}]}{T\varepsilon^2\cdot \E[O_{ij}]^2} = \frac{1}{3}
\end{align}

Now, we take $Q = \Theta\left(\log\left(\frac{1}{\delta}\right)\right)$ samples of $\widetilde{o}_j$: $\widetilde{o}_{1j},...,\widetilde{o}_{Qj}$ and take their median $\widehat{O}_{j}$ as our final estimator:
\begin{align}
    \widehat{o}_j := \text{median }(\widetilde{o}_{1j},...,\widetilde{o}_{Qj})
\end{align}
For each $t \in [Q]$, let us define the event $E_t$ as:
\begin{align}
    E_t = \left\{|\widetilde{o}_{tj}-\E[O_{ij}]| \geq \varepsilon\cdot \E[O_{ij}]\right\}
\end{align}
and let $X_t$ be the indicator random variable for $E_t$. Let $X := \sum\limits_{t=1}^Q X_t$ be the total number of events that occur. We know that the $X_t$-s are independent, and that $\mathbb{E}[X_t] = \Pr[E_t] \leq \frac{1}{3}$, which implies that $\mathbb{E}[X] \leq \frac{Q}{3}$. Now, because our estimator is the median, it over-estimates only if more than half of the samples over-estimate. Using a standard Chernoff bound, this implies that:
\begin{align}
    \Pr\left[\widehat{o}_j \geq (1+\varepsilon)\E[O_{ij}]\right] &\leq \Pr\left[X \geq \frac{Q}{2}\right] \\
    &=\Pr\left[X \geq \left(1+\frac{1}{2}\right)\cdot\frac{Q}{3}\right]\\
    &\leq e^{-\frac{Q}{12}}\\
    &\leq \frac{\delta}{2}
\end{align}
given our choice of $Q$. A symmetric argument establishes the under-estimation bound, meaning that $\widetilde{o}_j$ is an $\varepsilon$ multiplicative approximation of $\mathbb{E}[O_{ij}]$ with probability at least $1-\delta$. Since we have assumed that $\E[O_{ij}] = \Omega(1)$ and the total number of samples is $TQ$, we arrive at our space complexity. Finally, we set $\delta :=\delta / (nd)$ and union-bound over all $d$ dimensions and $n$ timesteps to get our desired result. 
\end{proof}

\section{Proof of Theorem \ref{thm:lower-bound-sliding-window}}
\label{appx:lower-bound-sliding-window}
\begin{theorem}
    Let $W_n := \operatorname{Attn}_W(q_n, K_n, V_n)$. For $d = \Omega(\log n)$, any algorithm that can, with probability at least $9/10$ produce an output $o_n 
    \in \mathbb{R}^d$ such that:
    \begin{align}
        |o_{nj}-W_{nj}|\leq \eta\cdot W_{nj}
    \end{align}
    for $\eta \in (0,1)$ and $j \in [d]$ must use at least $\Omega(Wd)$ bits of memory.
\end{theorem}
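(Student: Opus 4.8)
The plan is to replay the \textsc{Index}-based reduction behind Theorems~\ref{thm:lower-bound-exact} and~\ref{thm:lower-bound-appx}, but to pack all of Alice's information into a single sliding window so that the hard instance only carries $\Theta(Wd)$ bits. Concretely, I would reduce from the \textsc{Index} problem in which Alice holds $x\in\{0,1\}^{(W-1)\times d}$ and Bob holds an index $(i,j)\in[W-1]\times[d]$; by Theorem~\ref{thm:index-lb} any one-way protocol for this needs $\Omega(Wd)$ bits, and — exactly as in the earlier proofs — it is enough to build a protocol whose only message is the $S$-bit memory state of a hypothetical sliding-window-attention algorithm $\mathcal{A}$, which then forces $S=\Omega(Wd)$.

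Next I would set up the stream. Alice feeds $\mathcal{A}$ the length-$(n-1)$ prefix in which the first $n-W$ triples are $(\vec{0},\vec{0},\vec{0})$ — these positions will fall outside the final window — while the next $W-1$ triples use keys $k_{n-W+t}=f(e_t)$, $t\in[W-1]$, where $f$ is the JL map of Lemma~\ref{lem:inner-product-jl-for-all} applied to the standard basis. Since $d=\Omega(\log n)\ge\Omega(\log W)$, with probability at least $1-1/n$ these keys are pairwise $\varepsilon$-near-orthogonal with squared norms in $1\pm\varepsilon$, and Alice and Bob fix one such realization in advance (the probabilistic method). The value vector of the $t$-th informative token is the $t$-th row of $x$. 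Alice then transmits $\mathcal{A}$'s $S$-bit state, and Bob appends the final triple $(q,\vec{0},\vec{0})$ with $q=C\cdot f(e_i)$ for a scale $C=\Theta(\log n)$ chosen below, and reads $\mathcal{A}$'s output $o$.

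The readout step is where everything comes together. The window for this last timestep is $\{n-W+1,\dots,n\}$, which covers precisely Alice's $W-1$ informative tokens together with Bob's own value-$\vec{0}$ token; every masked-out position carries value $\vec{0}$, so those positions contribute nothing to $o$ even though they are unmasked on the value side, while among the unmasked positions the masked score vector vanishes except at position $n-W+i$, where it lies in $[C(1-\varepsilon),C(1+\varepsilon)]$, making the softmax spike there. As in Theorem~\ref{thm:lower-bound-exact}, the $j$-th coordinate of the true value $W_n$ is at most $\delta:=W e^{-C(1-2\varepsilon)}$ when $x_{ij}=0$ and at least $\Delta:=\frac{e^{C(1-\varepsilon)}}{e^{C(1-\varepsilon)}+2ne^{C\varepsilon}}$ when $x_{ij}=1$, where the $\Theta(n)$ extra unit-weight terms contributed by the masked positions and Bob's token have been absorbed into the denominators. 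A $(1\pm\eta)$-approximation $o$ therefore has $o_{nj}\le(1+\eta)\delta$ in the first case and $o_{nj}\ge(1-\eta)\Delta$ in the second, so Bob outputs $x_{ij}$ correctly whenever $(1+\eta)\delta<(1-\eta)\Delta$; taking $\varepsilon=0.1$ and $C$ a large enough multiple of $\log n$ (depending only on $\eta$) drives $\delta\to 0$ and $\Delta\to 1$, so the inequality holds and Bob succeeds with probability at least $9/10-1/n>2/3$.

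The one part that is genuinely new relative to Theorem~\ref{thm:lower-bound-appx}, and the step I would expect to require the most care, is that the softmax spike must now dominate not merely the other in-window entries but also the $\Theta(n)$ masked positions, each of which contributes $e^0=1$ after masking; this is exactly why the scale $C$ has to grow like $\log n$ rather than $\log W$. Since the theorem already assumes $d=\Omega(\log n)$, the JL projection of $W-1\le n$ basis vectors still works with the desired accuracy, so this causes no obstruction. (The off-by-one — Bob's appended token occupies one of the $W$ window slots, leaving only $W-1$ of Alice's rows readable — is harmless since $\Omega((W-1)d)=\Omega(Wd)$.)
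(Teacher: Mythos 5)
Your proposal is correct and follows essentially the same route as the paper's proof in Appendix~\ref{appx:lower-bound-sliding-window}: reduce from \textsc{Index} with Alice's string packed into the value vectors of the window positions, use JL-projected near-orthogonal keys there, have Bob spike the query with a scale $C=\Theta(\log n)$ chosen large enough for the softmax peak to dominate both the other in-window entries and the $n-W$ unit-weight masked positions, and read the bit off the approximate output. The only (helpful) deviation is your choice to encode $W-1$ rows rather than $W$ and to make the stream length $n$ rather than $n+1$, which avoids a small off-by-one in the paper where the first informative position can fall just outside the final window; this changes nothing asymptotically and both your $\delta$ and $\Delta$ bounds are valid simplifications of the paper's.
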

\begin{proof}
    We use the same idea from the proofs of the main lower bound, which is a reduction from the \textsc{Index} communication problem. Now, Alice holds a bit string $x \in \{0,1\}^{W\times d}$ and Bob holds $(i,j) \in [W]\times [d]$. Let $\mathcal{A}$ be a streaming algorithm for approximating sliding window attention that uses $S$ bits of memory. Alice starts by adding to $\mathcal{A}$ $n$ triples $(q_i,k_i,w_i)$, where:
    \begin{itemize}
        \item $q_i = \vec{0}$,
        \item $k_i = \vec{0}$ for $i \leq n-W$ and $k_i = f(e_i)$ for $i \geq n-W+1$. Here $f(x) := \frac{1}{\sqrt{d}}Ax$ is defined again as the linear JL transform that approximately preserves inner products. 
        \item $v_i = \vec{0}$ for $i \leq n-W$ and $v_i = x_{i,:}$ for $i \geq n-W+1$. In other words, Alice uses the last $W$ rows of $V$ to encode the information in $x$ through $\mathcal{A}$.
    \end{itemize}
    
    Alice communicates to Bob the $S$ bits of the memory tape of $\mathcal{A}$ after it has processed these $n$ timesteps. Bob now enters one more input to $\mathcal{A}$ based on his input $(i,j)$, as follows:
    \begin{itemize}
        \item $q = C\cdot f(e_{n-W+i}) \in \mathbb{R}^d$, where $C$ is a value to be determined. 
        \item $k,v = \vec{0}$
    \end{itemize}
    We claim that Bob can now recover $x_{ij}$ by examining the sliding window attention output. We have that:
    \begin{align}
        \text{Attn}_W(q,K_{n+1},V_{n+1})_j = \sum\limits_{\ell=n-W+1}^n x_{\ell-n+W,j}\cdot \xi_\ell
    \end{align}
    where $\xi \in \mathbb{R}^n$ is the softmax vector from the values: $(\underbrace{0,0,...,0}_{n-W\text{ values}},\underbrace{q^T k_{n-W+1},...,q^T k_n}_{W \text{ values}})$. We know that for $\ell \neq n-W+i$ we have that:
    \begin{align}
        \xi_\ell \leq e^{C\varepsilon}
    \end{align}
    and that 
    \begin{align}
        \xi_{n-W+i} \geq e^{C(1-\varepsilon)}
    \end{align}
    with probability at least $1-\frac{1}{n}$ because of the guarantees of Lemma \ref{lem:inner-product-jl-for-all}. So now we have two cases:
    \begin{itemize}
        \item $x_{ij} = 0$. We have that:
        \begin{align}
            \text{Attn}_W(q,K_{n+1},V_{n+1})_j &= \frac{\sum\limits_{\ell=n-W+1,\ell\neq n-W+i}^n x_{\ell-n+W,j}\cdot e^{q^T k_\ell}}{n-W+e^{q^T k_{n-W+i}}+\sum\limits_{\ell=n-W+1,\ell\neq n-W+i}^n e^{q^T k_\ell}}\\
            &\leq\frac{\sum\limits_{\ell=n-W+1,\ell\neq n-W+i}^n e^{q^T k_\ell}}{e^{q^T k_{n-W+i}}+\sum\limits_{\ell=n-W+1,\ell\neq n-W+i}^n e^{q^T k_\ell}}
        \end{align}
        We can maximize\footnote{We can be tighter here in the algebra and include $W$, but this is not necessary for determining the value of $C$.}this quantity as:
        \begin{align}
            \text{Attn}_W(q,K_{n+1},V_{n+1})_j \leq \frac{ne^{C\varepsilon}}{e^{C(1-\varepsilon)}+ne^{C\varepsilon}}:=\delta
        \end{align}
        \item $x_{ij} = 1$. We have that:
        \begin{align}
            \text{Attn}_W(q,K_{n+1},V_{n+1})_j &= \frac{e^{q^T k_{n-W+i}}+\sum\limits_{\ell=n-W+1,\ell\neq n-W+i}^n x_{\ell-n+W,j}\cdot e^{q^T k_\ell}}{n-W+e^{q^T k_{n-W+i}}+\sum\limits_{\ell=n-W+1,\ell\neq n-W+i}^n e^{q^T k_\ell}}\\
            &\geq \frac{e^{q^T k_{n-W+i}}}{n-W+e^{q^T k_{n-W+i}}+\sum\limits_{\ell=n-W+1,\ell\neq n-W+i}^n e^{q^T k_\ell}}\\
            &\geq\frac{e^{q^T k_{n-W+i}}}{(n-W)e^{C\varepsilon}+e^{q^T k_{n-W+i}}+\sum\limits_{\ell=n-W+1,\ell\neq n-W+i}^n e^{q^T k_\ell}}
        \end{align}
        We can minimize this quantity as:
        \begin{align}
            \text{Attn}_W(q,K_{n+1},V_{n+1})_j \geq \frac{e^{C(1-\varepsilon)}}{ne^{C\varepsilon}+e^{C(1-\varepsilon)}}:=\Delta
        \end{align}
    \end{itemize}
    Now, because of the approximation guarantees of $\mathcal{A}$, we require:
    \begin{align}
        \delta < \frac{1-\eta}{1+\eta}\Delta \iff \frac{ne^{C\varepsilon}}{e^{C(1-\varepsilon)}+ne^{C\varepsilon}} <\frac{1-\eta}{1+\eta}\cdot \frac{e^{C(1-\varepsilon)}}{ne^{C\varepsilon}+e^{C(1-\varepsilon)}}
    \end{align}
    We have seen that setting:
    \begin{align}
        C > \frac{2\ln n - \ln\frac{1-\eta}{1+\eta}}{2(1-2\varepsilon)}
    \end{align}
    suffices to hold this inequality, with which our proof concludes.
\end{proof} 

\section{Proof of Theorem \ref{thm:time-complexity}}
\label{appx:time-complexity}
In this section we include the full proof of Theorem \ref{thm:time-complexity}. Recall that non-adaptive algorithms are algorithms that pre-decide their data access patterns before seeing the data. 

\begin{theorem}
  Let $Z_n := \operatorname{Attn}(q_n,K_n,V_n)$ and suppose $\mathcal{A}$ is a non-adaptive streaming algorithm for token generation that calculates the attention function in an online manner and outputs an estimate $\mathcal{O} \in \mathbb{R}^d$ such that with probability at least $9/10$ it holds that:
  \begin{align}
      |\mathcal{O}_j-Z_{nj}| \leq \eta\cdot Z_{nj}
  \end{align}
  for $\eta \in (0,1)$ and $j \in [d]$. Then, $\mathcal{A}$ must take $\Omega(nd)$ time to process the last token $(q_n,k_n,v_n)$.
  \end{theorem}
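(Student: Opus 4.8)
The plan is to prove a time lower bound via an \emph{indistinguishability} (or hybrid) argument, reducing the claim to the statement that estimating $\E_{\ell \sim D_n}[V_\ell]$ requires $\Omega(n)$ queries to the softmax distribution $D_n$ in the worst case. First I would observe that since $\mathcal{A}$ is non-adaptive, the set $S$ of entries of $QK^T$ it accesses in the final step is fixed (possibly randomly, but independently of the data). Each access to an entry $q_n^T k_\ell$ costs $\Theta(d)$ time merely to compute the inner product, so it suffices to show that any non-adaptive algorithm that reads fewer than $cn$ of the scores $\{q_n^T k_\ell\}_{\ell=1}^n$ fails, with constant probability, to produce a $(1\pm\eta)$-approximation of $Z_n = \E_{\ell\sim D_n}[V_\ell]$.

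Next I would construct the hard family. Take a ``base'' instance $\sigma$ in which all keys are equal (say $k_\ell = \vec 0$) so that $D_n$ is uniform on $[n]$, and set all value vectors $v_\ell$ to some fixed vector $a \neq \vec 0$, so $Z_n = a$ in $\sigma$. For each $m \in [n]$ define $H_m$ by perturbing only coordinate $m$: leave $k_\ell$ for $\ell \neq m$ untouched, but boost $q_n^T k_m$ to a large value $\Theta(\log n)$ (using the scaling trick, e.g. $k_m$ a scaled basis-like vector), and set $v_m = b$ for a suitably chosen $b$ with $b \neq a$ and with the approximation regions of $a$ and $b$ disjoint. Because the spike at index $m$ carries almost all the softmax mass in $H_m$, we have $\E_{\ell \sim D_n^{(H_m)}}[V_\ell] \approx b$, which is bounded away from $a$, so no single output can be a valid $(1\pm\eta)$-approximation for both $\sigma$ and $H_m$. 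Crucially, $\sigma$ and $H_m$ differ only at the single position $m$: the scores $q_n^T k_\ell$ and the value vectors $v_\ell$ agree for all $\ell \neq m$.

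Then I would run the standard argument: let $m$ be drawn uniformly from $[n]$, and feed $\mathcal{A}$ the instance $H_m$. Since $\mathcal{A}$'s query set $S$ is chosen before seeing the data and $|S| \le cn < n/2$, we have $\Pr[m \in S] \le c < 1/2$. Conditioned on $m \notin S$, the transcript of $\mathcal{A}$ on $H_m$ is \emph{identically distributed} to its transcript on $\sigma$ (all queried positions agree), so $\mathcal{A}$'s output distribution is the same in both cases. Hence $\mathcal{A}$ cannot simultaneously be a valid approximator on $\sigma$ and on a uniformly random $H_m$: if it is correct on $\sigma$ with probability $\ge 9/10$, then conditioned on $m \notin S$ it outputs (approximately) $a$ with probability $\ge 9/10 - o(1)$, which is \emph{wrong} for $H_m$; overall it errs on $H_m$ with probability at least $(1 - c)(9/10 - o(1)) > 1/10$, contradicting the $9/10$ success guarantee. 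Therefore $|S| = \Omega(n)$, and accounting for the $\Theta(d)$ cost per queried score gives the $\Omega(nd)$ bound. One should also handle the $1/\operatorname{poly}(n)$ failure probability of the JL-type construction of $k_m$ by a union bound (or simply by noting only a single pair of vectors is involved here, so an exact near-orthogonal construction exists deterministically).

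The main obstacle I anticipate is making the indistinguishability rigorous when $\mathcal{A}$ is randomized and when we must track \emph{both} the key scores and the value vectors as the ``stream'' — one has to be careful that $\mathcal{A}$, being an online streaming algorithm, may also touch earlier timesteps, so the hard instances must be designed so that everything except position $m$ in the final attention computation is literally identical across $\sigma$ and all $H_m$ (same prior stream, same final query $q_n$), with the difference confined to $k_m$ and $v_m$; then the per-step non-adaptivity of the final step is what the counting argument exploits. A secondary technical point is choosing the spike magnitude $C = \Theta(\log n)$ large enough that the softmax in $H_m$ puts $\ge 1 - o(1)$ mass on index $m$ so that $\E[V_\ell] = (1-o(1))b + o(1)a$ lies outside every $(1\pm\eta)$-box around $a$ — this is exactly the $C > \frac{\ln n}{1-2\varepsilon}$-style calculation already carried out in the proof of Theorem~\ref{thm:lower-bound-exact}, so it should go through verbatim.
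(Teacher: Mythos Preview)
Your approach is essentially the paper's: an indistinguishability argument between a uniform-softmax base instance $\sigma$ and a family $\{H_m\}$ of single-spike instances, together with the observation that each score $q_n^\top k_\ell$ costs $\Theta(d)$ to compute at the final step.

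The one design difference worth flagging is how the two outputs are forced apart. You perturb \emph{both} $k_m$ and $v_m$ in $H_m$, so the spike lands on a distinguished value $b$. The paper instead keeps the value matrix $V$ \emph{identical} across $\sigma$ and every $H_i$: it uses a fixed two-level structure ($v_\ell = \mathbf{1}^d$ for $\ell \le n-\sqrt{n}$ and $v_\ell = \sqrt{n}\,\mathbf{1}^d$ otherwise), so that the correct output is $\approx 2\cdot\mathbf{1}^d$ on $\sigma$ but $\approx \mathbf{1}^d$ or $\approx \sqrt{n}\,\mathbf{1}^d$ on $H_i$, depending only on where the spike falls. This is exactly the clean fix for the worry you raise in your last paragraph: since $V$ never varies, there is nothing on the value side for the streaming algorithm to have noticed and stored at an earlier timestep, and the \emph{only} data distinguishing the instances is the single score at index $i$ --- which is precisely what the non-adaptive query budget controls (scores involve $q_n$ and hence cannot be precomputed). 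In your version the anomalous $v_m = b$ could in principle be flagged when it arrives in the stream, letting the algorithm output $b$ at the final step with no score queries at all; adopting the paper's fixed-$V$ trick closes that loophole. Finally, no JL is needed here --- the paper realizes the spike deterministically via $k_i = \mathbf{1}^d$, $k_j = \mathbf{0}$, and a scaled all-ones $q_n$ --- so your last caveat can be dropped.
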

  \begin{proof}
  We start from the expectation-based reformulation of self-attention (Equation \ref{eq:expectation-based-reformulation}):
  \begin{align}
      \text{Attn}(q_n,K_n,V_n) = \E_{\ell \sim D_n}[V_{\ell}]
  \end{align}
  We can treat $\mathcal{A}$ as an algorithm that makes queries to $D_n$ to estimate this expectation. Each query takes time $\Theta(d)$ just to calculate the corresponding score $q_n^T k_\ell$, so it suffices to show that $\Omega(n)$ queries are required to obtain a good approximation with high constant probability.
  
  We use the technique of \textit{indistinguishability}. Suppose that $\mathcal{A}$ makes $o(n)$ queries. Consider the following family $\mathcal{H}$ of input streams for $\mathcal{A}$. The family is parameterized by an index $i$ for which we define:
  \begin{align}
      q_n &= \frac{2}{d}\cdot \ln (n-1)\cdot \mathbf{1}^d,\\
      k_i &= \mathbf{1}^d,\,\text{ and }k_j = \mathbf{0}^d \text{ for }j\neq i,\\
      v_\ell &= \mathbf{1}^d,\,\text{ for }\ell \in [n-\sqrt{n}],\\
      v_\ell &= \sqrt{n}\cdot\mathbf{1}^d\,\text{ for } \ell \in \{\sqrt{n}+1,...,n\}
  \end{align}
  where $\mathbf{1}^d$ is the vector of all $1$s. Note that with this construction the softmax distribution with scores $q_n k_1,...,q_n^T k_n$ has exactly $1-\frac{1}{n}$ of its weight to index $i$ in its support:
  \begin{align}
      e^{q_n^T k_i} = \left(1-\frac{1}{n}\right)\cdot (n-1+e^{q_n^T k_i})
  \end{align}
  Now, consider another input stream $\sigma$ defined as:
  \begin{align}
      q_n &= \mathbf{1}^d,\\
      k_\ell &= \mathbf{0}^d,\quad \forall \ell\in [n]\\
      v_\ell &= \mathbf{1}^d,\,\text{ for }\ell \in [n-\sqrt{n}],\\
      v_\ell &= \sqrt{n}\cdot\mathbf{1}^d\,\text{ for } \ell \in \{\sqrt{n}+1,...,n\}
  \end{align}
  Suppose we pick some $i \in [n]$ uniformly at random. Then, since $\mathcal{A}$ makes $o(n)$ non-adaptive queries to $D_n$, it must output the same value for both inputs $\mathcal{H}(i)$ and $\sigma$ with probability at least $1-1/\text{poly}(n)$, because it queries the index $i$ on which the two input instances differ with probability at most $o(n)/{n} =1/\text{poly}(n)$.
  %will it query the index $i$ where the two input instances differ. 
  
  For the input $\sigma$, we know that:
  \begin{align}
      \E_{\ell \sim D_n}[V_{\ell}] = \frac{1}{n}\sum\limits_{i=1}^n v_{i} = \left(2-\frac{1}{\sqrt{n}}\right)\cdot \mathbf{1}^d
  \end{align}
  Meanwhile, for $\mathcal{H}(i)$ we have that $\E_{\ell \sim D_n}[V_{\ell}]$ equals:
  \begin{align}
      &\left(1-\frac{1}{n}\right)\cdot v_i + \frac{1}{n(n-1)} \sum\limits_{j\neq i}v_j\\
      &=\begin{cases}
          \left(1-\frac{1}{n}\right)\cdot \mathbf{1}^d + \frac{2n-\sqrt{n}-1}{n(n-1)},&\text{if } i \in [n-\sqrt{n}]\vspace{2mm}\\
          \left(1-\frac{1}{n}\right)\cdot \sqrt{n}\cdot \mathbf{1}^d + \frac{2(n-\sqrt{n})}{n(n-1)},&\text{otherwise}
      \end{cases}\\
      &=\begin{cases}
          \left(1-\frac{1}{n}\right)\cdot \mathbf{1}^d + \frac{1}{\text{poly}(n)},&\text{if } i \leq n-\sqrt{n}\vspace{2mm}\\
          \left(1-\frac{1}{n}\right)\cdot \sqrt{n}\cdot \mathbf{1}^d + \frac{1}{\text{poly}(n)},&\text{otherwise}
      \end{cases}
  \end{align}
  
  In any case, the two expectations differ by a factor of at least $2 > 1+\eta$ in every dimension. This a contradiction to the approximation guarantees of $\mathcal{A}$, which means that $\Omega(n)$ queries to $D_n$ are required non-adaptively. Thus, the time it takes to process the last token is $\Omega(nd)$. 
  \end{proof}

\end{document}